\def\BibTeX{{\rm B\kern-.05em{\sc i\kern-.025em b}\kern-.08em
    T\kern-.1667em\lower.7ex\hbox{E}\kern-.125emX}}
\newtheorem{observation}{Observation}
\begin{document}

\title{Byzantine Dispersion on Graphs \thanks{A preliminary version of this work~\cite{MMM21} appeared in the proceedings of the 35th IEEE International Parallel \& Distributed Processing Symposium (IPDPS 2021).}}
\author{Anisur Rahaman Molla\inst{1}\orcidID{0000-0002-1537-3462}\thanks{The work of A. R. Molla was supported, in part, by DST INSPIRE Faculty Research Grant DST/INSPIRE/04/2015/002801, Govt. of India and ISI DCSW/TAC Project (file number E5412).} \and
Kaushik Mondal\inst{2}\orcidID{0000-0002-9606-9293} \and
William K. Moses Jr.\inst{3}\orcidID{0000-0002-4533-7593}\thanks{Part of the work was done while W. K. Moses Jr. was a postdoc at the Faculty of Industrial Engineering and Management at the Technion - Israel Institute of Technology in Haifa, Israel. The work of W. K. Moses Jr. was supported in part by a Technion fellowship and in part by NSF grants, CCF1540512, IIS-1633720, CCF-1717075, and BSF grant 2016419.}}

\authorrunning{Molla et al.}
\institute{Computer and Communication Sciences, Indian Statistical Institute, Kolkata, India.\\
\email{molla@isical.ac.in} \and
Dept. of Mathematics, Indian Institute of Technology Ropar, Ropar, India.\\
\email{kaushik.mondal@iitrpr.ac.in} \and
Dept. of Computer Science, University of Houston, Houston, USA.\\
\email{wkmjr3@gmail.com}}

\maketitle

\begin{abstract}
This paper considers the problem of Byzantine dispersion and extends previous work along several parameters. The problem of Byzantine dispersion asks: given $n$ robots, up to $f$ of which are Byzantine, initially placed arbitrarily on an $n$ node anonymous graph, design a terminating algorithm to be run by the robots such that they eventually reach a configuration where each node has at most one non-Byzantine robot on it.

Previous work solved this problem for rings and tolerated up to $n-1$ Byzantine robots. In this paper, we investigate the problem on more general graphs. We first develop an algorithm that tolerates up to $n-1$ Byzantine robots and works for a more general class of graphs.

We then develop an algorithm that works for \emph{any} graph but tolerates a lesser number of Byzantine robots.

We subsequently turn our focus to the strength of the Byzantine robots. Previous work considers only ``weak" Byzantine robots that cannot fake their IDs. We develop an algorithm that solves the problem when Byzantine robots are not weak and can fake IDs.

Finally, we study the situation where the number of the robots is not $n$ but some $k$. We show that in such a scenario, the number of Byzantine robots that can be tolerated is severely restricted. Specifically, we show that it is impossible to deterministically solve Byzantine dispersion when $\lceil k/n \rceil > \lceil (k-f)/n \rceil$.
\keywords{Dispersion \and Mobile robots \and Distributed algorithms \and Byzantine faults \and Faulty robots \and General graphs.}
\end{abstract}

\section{Introduction}
\label{sec:intro}

The questions of {\em what can be computed} by independent computational entities working together and {\em how fast can it be computed} serve as a high level motivation for research that spans a broad variety of fields such as population protocols~\cite{AADFP06}, mobile robots~\cite{PRT11}, and programmable matter~\cite{DHRS19}
among others. Mobile robots on graphs, as a paradigm, restricts this question to situations where the entities are restricted in their movement and communication capabilities. Within this area, problems that are studied take on the form of either having the robots work together to find something in the graph (e.g., exploration~\cite{Bampas:2009,Cohen:2008,Das13,Dereniowski:2015,Fraigniaud:2005,MencPU17}, treasure hunting~\cite{MP15})
or form a certain configuration (e.g., gathering~\cite{CFPS12,CP02,DKLHPW11,P07}, scattering~\cite{Barriere2009,ElorB11,Poudel18,Shibata:2016}, pattern formation \cite{SY99},
convergence \cite{CP04}).

Dispersion is one such problem of the latter category. Introduced in this setting by Augustine and Moses Jr.~\cite{Augustine:2018}, it asks the following question. Given $n$ robots initially placed arbitrarily on an $n$ node graph, devise an algorithm such that the robots reach a configuration where exactly one robot is present on each node. This problem can be used as an abstraction for the large swath of problems where computational entities must share resources with the constraint that sharing the same resource is much more expensive than searching for an unused resource. The original paper looked at the trade-offs between time taken to reach this configuration and the memory required by each robot. Subsequent papers~\cite{Kshemkalyani,KMS2019,KMS2020-ICDCN,KMS2020,tamc19} have expanded the scope of this problem. However, only the recent papers of Molla et al.~\cite{MMM20-byzantine,MMM21-TCS} expands the scope of this problem to capture the issue of Byzantine faults. More specifically, how can the problem of dispersion be adapted and solved when some fraction of the robots may act in a Byzantine manner. In the real world, where systems must be fault tolerant and errors are the norm instead of the exception, more work must be done to understand and deal with them.

The previous work sought to study this problem on a ring and proposed time and memory optimal algorithms for that setting. In this paper, we expand on that work and broaden it in a variety of ways.

\subsection{Model}\label{subsec:model}
Consider a graph with $n$ nodes and $m$ edges. The nodes are unlabeled and have ports leading to other nodes. The ports of a node have unique labels in $[1,\delta]$ where $\delta$ is the degree of the node. Note that an edge between adjacent nodes may have different port numbers assigned to it by the corresponding nodes.

Each robot has a unique identifier assigned to it from the range $[1,n^c]$, where $c>1$ is a constant, unless otherwise stated. Two robots co-located on the same node can communicate with each other via messages.\footnote{In contrast to the model used in~\cite{MMM20-byzantine}, we allow robots to communicate with each other via messages instead of exposed memory. We believe this allows for a cleaner description of the model as well as behavior that is more closely related to the original notions of weak and strong Byzantine robots from~\cite{DPP14}, to be described later.} If a robot moves from one node to an adjacent node, it is aware of both port numbers assigned to the edge through which it passed. Each robot knows the value of $n$.

We use the notion of weak and strong Byzantine robots from~\cite{DPP14}. A (weak or strong) Byzantine robot may behave maliciously and arbitrarily, i.e., it may share wrong information, perform moves that are deviations from the algorithm, etc. A weak Byzantine robot is restricted in that it cannot fake its ID, i.e., its ID is attached to messages sent by it and and cannot be altered. A strong Byzantine robot has no such restriction on it. Among the $n$ robots, up to $f$ of them are considered to be Byzantine. Robots do not know the value of $f$ except for one algorithm in Section~\ref{sec:alg-strong-byz-robots}.

We consider a synchronous system, where in each round a robot performs the following tasks in order: (i) Robots that are co-located at the same node may communicate with each other.  Robots may perform some local computation and decide to move away from the node via a port. (ii) Robots perform a movement, if any, to an adjacent node. We also make use of the idea of \textit{breaking a round into sub-rounds} from~\cite{tamc19} when designing the procedure {\sc Dispersion-Using-Map} (see Section~\ref{sec:dispersion-using-map}). The concept of breaking a round into sub-rounds is just to divide a round into smaller fragments of time, i.e., sub-rounds. All the sub-rounds of a round may involve local computation and communication (task (i)). Only at the end of the round (which may be considered the final sub-round of the round), can robots move (task (ii)). This concept is needed when we need a little bit of synchronization between different instances of the local computation and communication. We assume that all robots are initially awake and can engage in the algorithms from the beginning.

%======================================

We formally state the problem of Byzantine dispersion below, slightly modified from the definition in~\cite{MMM20-byzantine,MMM21-TCS}.
\begin{definition}[Byzantine dispersion]
\label{def:byzantine-dispersion}
Given $n$ robots, up to $f$ of which are Byzantine, initially placed arbitrarily on a graph of $n$ nodes, the non-Byzantine robots must re-position themselves autonomously to reach a configuration where each node has at most one non-Byzantine robot on it and subsequently terminate.
\end{definition}
%=========================================================
\subsection{Our Contributions}\label{subsec:our-contrib}
We first present an algorithm to solve Byzantine dispersion that tolerates up to $n-1$ Byzantine robots and runs in $polynomial(n)$ rounds where robots start in an arbitrary configuration in Section~\ref{sec:alg-upto-n-1}. However, the algorithm only works on the class of graphs for which the quotient graph of the graph is isomorphic to the original graph.

\begin{sloppypar}
In Section~\ref{sec:alg-general-graphs}, we develop increasingly faster algorithms for more general graphs at the expense of handling less Byzantine robots. We first develop an algorithm that runs on any graph when robots start in an arbitrary configuration that takes $O(n^4 |\Lambda_{good}| X(n))$ rounds, where $\Lambda_{good}$ is the length of the largest ID among all non-Byzantine robots and $X(n)$ is the number of rounds required to explore any graph of $n$ nodes, but tolerates up to $\lfloor n/2 - 1 \rfloor$ Byzantine robots. From~\cite{AKLLR79} and~\cite{TZ14}, we know that $X(n) = \Tilde{O}(n^5)$, so the algorithm takes $\Tilde{O}(n^9)$.\footnote{In fact, when more information of the graph is known, $X(n)$ is even smaller. If the maximum degree of the graph is $d$, then $X(n) = \Tilde{O}(d^2 n^3)$ and if the graph is a simple $d$-regular graph, then $X(n) = \Tilde{O}(dn^3)$.}  We show that this algorithm takes only $O(n^4)$ rounds when robots start in a {\em gathered configuration}, i.e., all robots start at the same node. We then show how to speed up the algorithm to take $O(n^3)$ rounds for an initially gathered configuration when the number of Byzantine robots is restricted to at most $\lfloor n/3 -1 \rfloor$. We then show that when the number of Byzantine robots is highly restricted to at most $O(\sqrt{n})$, we can develop an algorithm to solve Byzantine dispersion from an arbitrary initial configuration in $O((f +  |\Lambda_{all}|) X(n))$ rounds, where $\Lambda_{all}$ is the length of the largest ID among all robots. Notice that $O((f +  |\Lambda_{all}|) X(n)) = \Tilde{O}(n^5 \sqrt{n})$.
\end{sloppypar}

All the above results are only for weak Byzantine robots. In Section~\ref{sec:alg-strong-byz-robots}, we develop an algorithm that solves Byzantine dispersion in the presence of at most $\lfloor n/4 - 1 \rfloor$ strong Byzantine robots in $exponential(n)$ rounds when robots are initially arbitrarily located. We show that when robots start from a gathered configuration, the algorithm takes $O(n^3)$ rounds.

Finally, in Section~\ref{sec:impossibility-result}, we show that when the total number of robots $k$ is different from the number of nodes in the graph, it is impossible to deterministically solve Byzantine dispersion when there are too many Byzantine robots, i.e., $\lceil k/n \rceil > \lceil (k-f)/n \rceil$, even if those Byzantine robots are weak. Our results are summarized in Table~\ref{table:results}.

\begin{table*}[ht]
	\caption{Our results for Byzantine dispersion of $n$ robots on an $n$ node graph in the presence of at most $f$ Byzantine robots. The third column gives the time complexity of the result. The fourth column indicates whether all robots start at the same node (gathered) or from an arbitrary configuration. The fifth column indicates up to how many Byzantine robots the algorithm can tolerate. The final column indicates whether the algorithm can handle strong Byzantine robots. Note that $\Lambda_{good}$ is the length of the largest ID among non-Byzantine robots, $\Lambda_{all}$ is the length of the largest ID among all robots, and $X(n)$ is the number of rounds required to explore any graph of $n$ nodes.}
\centering 
		\resizebox{\columnwidth}{!}{
	\begin{tabular}{|c|c|c|c|c|c|}
		\hline
		Serial & Theorem & Running Time &  Starting & Byzantine & Handles Strong\\
		 No. & No. & (in rounds) &   Configuration & Tolerance &  Byzantine Robots  \\
		\hline
		\hline
		1* &  1 & $polynomial(n)$ &  Arbitrary & $n-1$ & No \\
		\hline
		2$^\maltese$ & 2 & $O(n^4 |\Lambda_{good}| X(n))$  & Arbitrary & $\lfloor n/2 -1 \rfloor$ & No \\
		\hline
		3$^\diamond$ & 5 & $O((f +  |\Lambda_{all}|) X(n))$ &  Arbitrary & $O(\sqrt{n})$ & No \\
		\hline
		\hline
		4 & 3 & $O(n^4)$ &  Gathered & $\lfloor n/2 - 1 \rfloor$ & No\\
		\hline
		5 & 4 & $O(n^3)$ &  Gathered & $\lfloor n/3 - 1 \rfloor$ & No\\
		\hline
		\hline
		6$^\dagger$ & 7 & $exponential(n)$  & Arbitrary & $\lfloor n/4 - 1\rfloor$ &  Yes \\
		\hline
		7 & 6 & $O(n^3)$  & Gathered & $\lfloor n/4 - 1\rfloor$ & Yes \\
		\hline
		\hline
		\multicolumn{6}{|l|}{*This result holds only for those graphs where the quotient graph is isomorphic to the original graph.} \\
		\multicolumn{6}{|l|}{$^\maltese$Since $|\Lambda_{good}| = O(\log n)$ and $X(n) = \Tilde{O}(n^5)$ (see~\cite{AKLLR79,TZ14}), $O(n^4 |\Lambda_{good}| X(n)) = \Tilde{O}(n^9)$.} \\
		\multicolumn{6}{|l|}{$^\diamond$Since $|\Lambda_{all}| = O(\log n)$,$f=O(\sqrt{n})$, and $X(n) = \Tilde{O}(n^5)$ (see~\cite{AKLLR79,TZ14}), $O((f +  |\Lambda_{good}|) X(n)) = \Tilde{O}(n^5 \sqrt{n})$.} \\
		\multicolumn{6}{|l|}{$^\dagger$This result requires the robots to know the value of $f$.} \\
		\hline
	\end{tabular}
		}
	\label{table:results}
\end{table*}

\subsection{Technical Difficulties and High-Level Ideas}\label{subsec:tech-diff}
It is relatively simple to have robots work together to solve dispersion when all robots are non-Byzantine. One of the big issues when robots are Byzantine is that they may lie when they communicate with other robots. How then can one handle such robots?

Let us look a little deeper into the issue. Where does honest communication help robots solve the problem? Honesty is needed so that robots can learn which nodes are available for them to settle down at. But what if we approach this issue in another way. For instance, what if robots magically had access to the entire graph. Then would they still need honesty from other robots in order to settle down? It turns out that the answer is no on a ring, as demonstrated in~\cite{MMM20-byzantine,MMM21-TCS} in their algorithm \textsc{Time-Opt-Ring-Dispersion}. We generalize that algorithm to all graphs in Section~\ref{sec:dispersion-using-map}. In essence, if a robot $R$ can keep track of where all it has been, it can keep track of where other robots claimed to settle down. If $R$ sees a robot claiming to settle down somewhere where it should not, $R$ can easily discern that it is dealing with a Byzantine robot. Furthermore, we show that magic isn't always needed to allow a single robot to  learn the entire graph structure in Section~\ref{sec:find-map}, at least for a certain class of graphs. We borrow a technique from~\cite{CKP12} that allows a single robot to construct the quotient graph (defined in Section~\ref{sec:find-map}) of a given graph. For the class of graphs whose quotient graph is isomorphic to the original graph, this allows each robot to know the original graph structure.

However, to learn the entire graph for any graph, we require some communication. And so we show how to deal with dishonesty from Byzantine robots when constructing the graph in Section~\ref{sec:alg-general-graphs}. First we gather robots using the gathering technique in~\cite{DPP14}. We then use the robot and token paradigm from~\cite{DPP14} to construct a map of the graph. Two robots $R$ and $R'$ pair up and one acts as a robot that can explore the graph and the other acts as a movable token. They work together to construct a map and then return to the node where all robots were gathered. Then $R$ pairs up with another robot $R''$ and repeats the process. In this manner, $R$ pairs up with the other gathered robots and forms a corresponding number of maps of the graph. Among all these maps, the map that was constructed the majority of times is taken to be the map of the graph. And since there are at most  $\lfloor n/2 - 1 \rfloor$ Byzantine robots, each robot will correctly determine the map of the graph.

\begin{sloppypar}
The previous algorithm works for any arbitrary graph but takes $O(n^4 |\Lambda_{good}| X(n)) = \Tilde{O}(n^9)$ rounds to complete. What is the bottleneck that makes things so slow? It turns out that gathering slows us down dramatically. If robots are initially gathered together in the same node, the running time comes down to $O(n^4)$ rounds. If we further relax the number of Byzantine robots we can tolerate to $\lfloor n/3 - 1 \rfloor$, then it turns out that by tweaking the map finding algorithm, we can solve the problem in $O(n^3)$ rounds. Essentially, instead of having each robot pair up with each of the remaining robots (which may take $O(n)$ total pairings), we have the robots work together to form three groups which work together to form just $O(1)$ maps (which takes $O(1)$ pairings). And again, the map that is constructed the majority of times is taken as the map of the graph.
\end{sloppypar}

As mentioned, the bottleneck to the algorithm for general graphs is gathering. If we restrict the number of Byzantine robots to be $O(\sqrt{n})$, then we can use the gathering algorithm of~\cite{HTNOI20} coupled with our techniques to solve Byzantine dispersion in $O((f +  |\Lambda_{all}|) X(n)) = \Tilde{O}(n^5 \sqrt{n})$ rounds when robots are initially arbitrarily located on the graph.

Finally, in Section~\ref{sec:alg-strong-byz-robots}, we extend previous ideas to handle strong Byzantine robots. However, the trade-off is that our algorithms can only tolerate at most $\lfloor n/4 \rfloor -1$ Byzantine robots.

\subsection{Related Work}\label{subsec:rel-work}

The problem of dispersion of mobile robots on a graph was first introduced by Augustine and Moses Jr. \cite{Augustine:2018} and they provided solutions for various types of graphs both when robots were initially gathered at one node and when they started at arbitrary locations. Subsequently, the problem was studied by several papers \cite{AAMSS18,Kshemkalyani,KMS2019,KMS2020-ICDCN,KMS2020,tamc19,SSKM20,DBS21} in various settings to improve the efficiency of the solutions. The best known time-memory efficient algorithm for dispersion of $k\leq n$ robots on an arbitrary $n$-node graph when termination is required has time complexity of $O(\min\{m, k\Delta\}\log n)$ rounds and $O(\log n)$ bits of memory per robot~\cite{KMS2019}, where $\Delta$ is the maximum degree of the graph.
Recently, the problem of Byzantine dispersion was introduced by Molla et al.~\cite{MMM20-byzantine,MMM21-TCS} and time and memory optimal solutions were proposed for rings.

Dispersion has also been studied in the presence of crash faults~\cite{PSM21} as well as other models where either the graph was dynamic~\cite{AAMSS18}, the communication was global~\cite{KMS2020-ICDCN,KMS2020}, or randomization was used~\cite{tamc19,DBS21}.

Some problems that are closely related to dispersion on graphs are exploration, scattering, and gathering. The problem of graph exploration has been studied extensively in the literature for arbitrary graphs and usually assumes robots are initially gathered,
e.g., \cite{Bampas:2009,Cohen:2008,Das13,Dereniowski:2015,Fraigniaud:2005,MencPU17}. While some of these exploration algorithms (especially those which fit the current model) can be adapted to solve dispersion (with additional work), they, however, provide inefficient time-memory bounds; a detailed comparison is given in \cite{KMS2019}.
Another problem related to dispersion is scattering (also known as uniform-deployment) of mobile robots in a graph and is also studied by several papers, e.g., it has been studied for rings \cite{ElorB11,Shibata:2016} and grids \cite{Barriere2009,Poudel18} under different assumptions.

Gathering of robots is especially related, as it can be used as a subroutine when solving dispersion, as is seen in this paper. While it appears that the notion of Byzantine robots is not new to the mobile robots literature in general (e.g.,~\cite{BPT09,ABCTU13,CGKKNOS16}), it appears that its usage in the context of mobile robots on a graph is fairly recent.
To the best of our knowledge, apart from the problem of Byzantine dispersion that was recently introduced, only the problem of gathering has been studied in the context of Byzantine robots in the graph setting. Specifically, Dieudonn\'e et al.~\cite{DPP14} introduced the notion of Byzantine robots to the gathering problem. The paper mainly investigates the possibility and impossibility of gathering of mobile robots on graphs in the presence of Byzantine robots. They present some possibility results under certain assumptions on the minimum number of non-Byzantine robots present and on whether the value of $n$ is known to the robots. In particular, when $n$ is known, they develop an algorithm to gather strong Byzantine robots in an exponential number of rounds and another algorithm to gather weak Byzantine robots in $4n^4\textbf{P}(n,|\Lambda_{good}|)$ rounds.\footnote{Note that $\textbf{P}(n,|\Lambda_{good}|) = O(|\Lambda_{good}| X(n))$ (see~\cite{HTNOI20}).} In our setting, the latter algorithm takes $\Tilde{O}(n^9)$ rounds. There are some follow-up papers on Byzantine gathering, mostly focused on the feasibility of the solutions, e.g., \cite{BDD16,BDL18,DFLMS19}. Hirose et al.~\cite{HTNOI20} focus on reducing the run time of algorithms when the number of Byzantine robots is restricted. Some other interesting results study gathering with Byzantine robots in different models~\cite{DLM15,TOI18}.

\subsection{Paper Organization}\label{subsec:paper-org}
We show how to solve Byzantine dispersion in the presence of up to $n-1$ Byzantine robots for a certain class of graphs in Section~\ref{sec:alg-upto-n-1}. In Section~\ref{sec:alg-general-graphs}, we present several algorithms for general graphs. We show how to tolerate strong Byzantine robots in Section~\ref{sec:alg-strong-byz-robots}. We present our impossibility result in Section~\ref{sec:impossibility-result}. Finally, we present conclusions and future directions of work in Section~\ref{sec:conclusion}.

\section{Byzantine Dispersion: Tolerating any Number of Byzantine Robots}
\label{sec:alg-upto-n-1}
We devise an algorithm which solves dispersion while tolerating up to $n-1$ Byzantine robots. However, this algorithm works only for the class of graphs whose quotient graph is the same as (or isomorphic to) itself. The algorithm uses two procedures as building blocks. The first procedure, which is developed by Czyzowicz et al. \cite{CKP12}, is based on computing a map of the anonymous graph. In a nutshell, each robot computes a map of the graph independently and parallelly starting from its initial position. It takes time polynomial in $n$. Let us call this procedure {\sc Find-Map}. Then the second procedure disperses the robots on the graph (again independently and parallelly) with the help of the maps. In this procedure, each robot independently finds a position (node) on the graph to be settled in $O(n)$ time. Let us call this second procedure {\sc Dispersion-Using-Map}. We discuss them in detail below.

\medskip
\subsection{\textbf{Procedure} \textsc{Find-Map}}\label{sec:find-map}
It is difficult for a robot operating on a graph $G$ to compute a map (i.e., an isomorphic copy) of $G$. However, Czyzowicz et al. \cite{CKP12} present a protocol to construct the quotient graph of an anonymous graph $G$ by a single robot. Intuitively, this graph contains all the information that can be gained about $G$ from its exploration by a single robot. The quotient graph can be used in a manner similar to a map or sketch of the given graph $G$.

The following definition of quotient graph is adapted from \cite{YK96,CKP12}.  Let $G$ be a graph. The quotient graph of $G$, denoted $Q_G$, is a (not necessarily simple) graph defined as follows. Nodes of $Q_G$ correspond to sets of nodes of $G$ which have the same view. For any pair (not necessarily distinct) of nodes $X$ and $Y$ of $Q_G$, corresponding to some two subsets of nodes of graph $G$, there is an edge between $X$ and $Y$ with labels $p$ at $X$ and $q$ at $Y$, if there exists an edge $(x, y)$ in $G$ with $x \in X, y \in Y$ and with ports $p$ at $x$ and $q$ at $y$. Graphs $G$ and $H$ are called equivalent, if $Q_G = Q_H$, i.e., they are isomorphic.

Czyzowicz et al. \cite{CKP12} show that a single robot, with sufficient amount of memory, can construct the quotient graph of an anonymous graph with the help of the port labeling.
The following result follows from Section~5 of \cite{CKP12}.

\begin{lemma}\label{lem:quotient-graph}
Given an $n$-node anonymous graph with port labeling, there is a polynomial (in $n$) time algorithm to construct the quotient graph of $G$  by a single robot with $O(m\log n)$ bits of memory, where $m$ is the number of edges in $G$.
\end{lemma}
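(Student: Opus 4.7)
The plan is to establish this by describing an explicit exploration strategy and then verifying its complexity. The central classical ingredient is Norris's theorem: in an $n$-node graph, two nodes have the same view (as infinite port-labeled trees) if and only if their views truncated at depth $n-1$ already coincide. Hence the robot only needs to inspect each candidate node's neighborhood up to depth $n-1$ to decide which view-class, i.e., which vertex of $Q_G$, it belongs to.

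First, I would have the robot systematically enumerate all port-sequences (call them \emph{traces}) from its initial position $v_0$; since the graph is anonymous, such traces are the robot's only way of naming locations. Second, for each trace $\tau$ the robot walks $\tau$ from $v_0$, performs a depth-bounded exploration of depth $n-1$ from the endpoint to compute the truncated view there, and returns to $v_0$ by reversing $\tau$ (each edge traversal reveals both incident port numbers, so reversal is well-defined). Third, the robot groups traces by the canonical form of their truncated views: each equivalence class becomes a vertex of $Q_G$. Finally, for each trace $\tau$ ending at a class $X$ and each port $p$ at that endpoint, the robot walks $\tau \cdot p$, determines the class $Y$ of the resulting endpoint, and records an edge between $X$ and $Y$ labeled with the two associated port numbers. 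Exploration terminates once no new class is discovered, which by Norris's theorem must happen after traces of length $O(n)$.

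The main obstacle I expect is that the robot has no way to ``jump'' directly between nodes, so every probe requires walking from $v_0$ and back, and every identification must be inferred from equality of truncated views rather than by physically recognizing nodes. Correctness follows because Norris's theorem makes equality of truncated views both necessary and sufficient for belonging to the same quotient class, so the edges recorded are exactly the edges of $Q_G$. Polynomial running time follows because there are at most $n$ distinct classes, each truncated view has size polynomial in $n$, each walk along a trace has length $O(n)$, and the depth-$(n-1)$ local exploration at each endpoint is polynomial in $n$. The $O(m \log n)$ memory bound comes from the fact that the final output $Q_G$ has at most $m$ edges, each labeled by a pair of port numbers encodable in $O(\log n)$ bits, and intermediate view-tree computations and comparisons can be performed within the same asymptotic budget by canonicalizing and discarding views as classes are identified.
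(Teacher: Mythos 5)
The paper does not actually prove this lemma; it imports it wholesale from Section~5 of the cited work of Czyzowicz et al.\ \cite{CKP12}, so your attempt is necessarily a from-scratch argument. Unfortunately it has a genuine gap at its core. Your entire class-identification mechanism rests on computing and comparing depth-$(n-1)$ truncated views, and you assert that ``each truncated view has size polynomial in $n$'' and that ``the depth-$(n-1)$ local exploration at each endpoint is polynomial in $n$.'' Both claims are false: the depth-$d$ view is the tree of \emph{all} walks of length at most $d$ from a node, so it has on the order of $\Delta^{d}$ nodes, and for $d=n-1$ and $\Delta\ge 2$ this is exponential in $n$. Consequently the exhaustive walk you prescribe at each endpoint takes exponentially many rounds, the view cannot be stored (let alone canonicalized) in $O(m\log n)$ bits, and the comparison step you wave at with ``canonicalizing and discarding views as classes are identified'' is precisely the hard part left unaddressed. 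Norris's theorem is correct as a combinatorial fact, but it only bounds the \emph{depth} needed to distinguish views; it does not make the truncated views small or give an efficient equivalence test for a robot that cannot recognize revisited nodes.

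A secondary issue compounds this: you propose to ``systematically enumerate all port-sequences'' of length $O(n)$ from $v_0$, of which there are exponentially many. To keep the trace count polynomial you would need to extend only one representative trace per already-identified quotient class (at most $n$ classes, at most $\deg$ extensions each), but deciding whether a newly reached endpoint falls into an existing class is exactly the view-equivalence test whose efficiency is in question, so the argument is circular. Repairing the proof requires a genuinely different idea for deciding view-equivalence within polynomial time and $O(m\log n)$ memory --- this is the actual content of \cite{CKP12}, and it is not recovered by the strategy you describe.
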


We can easily adapt this result to our Byzantine setting. Each robot independently and parallely runs the above protocol and constructs the quotient graph of $G$. However, for the dispersion problem, it is required that the quotient graph must be the same or an isomorphic copy of $G$ so that robots may not incorrectly settle at the same node (recall that a single node of the quotient graph could be a set of nodes of $G$).\footnote{It is also a requirement for the rendezvous and graph exploration problems \cite{CKP12,YK96}.} Thus, we get the following Corollary.

\begin{corollary}\label{cor:find-map}
Given a $n$-node anonymous graph, $n$ robots, placed arbitrarily on G, each with $O(m\log n)$ bits of memory, can construct a map (or the quotient graph) of $G$ independently and parallely in polynomial (in $n$) rounds.
\end{corollary}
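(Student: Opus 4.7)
The plan is to observe that the procedure from Lemma~\ref{lem:quotient-graph} is a purely single-robot routine: at each step it only reads the port labels incident to the current node, performs local computation using its private memory, and moves along a chosen port. It never reads another robot's state nor relies on communication. Consequently, several copies of the routine executed concurrently by different robots do not interfere with one another, regardless of where those other robots currently reside and regardless of whether some of them are Byzantine.

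Concretely, I would have every one of the $n$ robots invoke the Czyzowicz--Kosowski--Pelc procedure of Lemma~\ref{lem:quotient-graph}, each starting from its own initial position, in the same round. Because each robot uses only its own $O(m\log n)$ bits of memory and its own traversal of the graph, and because the single-robot guarantee is already ``polynomial in $n$,'' all $n$ instances terminate within the same polynomial number of rounds. Each robot thereby ends with a copy of $Q_G$, together with a pointer identifying the node of $Q_G$ corresponding to its own starting position (kept implicitly as the root of its construction).

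Finally, I would invoke the hypothesis distinguishing this class of graphs, namely that $Q_G$ is isomorphic to $G$. Under this assumption the quotient graph that each robot has built is a legitimate map of $G$: nodes of the map are in one-to-one correspondence with nodes of $G$, and port labels are preserved. Hence every robot simultaneously holds an isomorphic copy of $G$ in polynomial time.

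The main obstacle is merely conceptual rather than technical: one must confirm that the single-robot routine of~\cite{CKP12} makes no implicit use of the environment being otherwise empty, so that parallel invocations are independent. A minor caveat worth stating is that different robots anchor their maps at different starting nodes, so the resulting isomorphic copies are not canonically labelled; this is harmless since each robot additionally knows its own location inside its own copy, which is all that the subsequent dispersion procedure needs.
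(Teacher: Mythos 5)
Your proposal matches the paper's own justification: the paper likewise derives the corollary by having each robot independently and in parallel run the single-robot quotient-graph construction of Lemma~\ref{lem:quotient-graph}, noting that no communication is involved so Byzantine robots cannot interfere, and deferring the requirement that $Q_G$ be isomorphic to $G$ to the surrounding discussion of the graph class. Your elaboration of why concurrent executions do not interact is a correct and slightly more explicit rendering of the same argument.
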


Note that while having a map makes it easier to explore the graph \cite{YK96}, it won't help to solve the dispersion problem straightforwardly. The key reason is that the Byzantine robots can occupy a node before a non-Byzantine robot moves to that node. So, intuitively, a robot may not find an empty node, but be forced to settle at an occupied node. Thus, a robot needs to distinguish between non-Byzantine and Byzantine robots, which is a challenging task since a Byzantine robot can behave maliciously by sending incorrect information. We now present a protocol which overcomes these challenges.

\medskip
\subsection{\textbf{Procedure} {\sc Dispersion-Using-Map}}\label{sec:dispersion-using-map}
Assume that each robot has a map of the graph before starting this procedure and the robots start this procedure at the same time (since it is a synchronous system). The maps may be distinct from each other and the robots may be positioned at different nodes. A robot locally computes a spanning tree (say, a DFS tree) on the map of the graph in its memory, where the root of the spanning tree is the current position (node) of the robot.  Each robot locally marks the nodes of the graph (map) by assigning some identities. Without loss of generality, assume that the node IDs, maintained by a robot, are $\{v_1, v_2, \dots, v_n\}$. The idea is that each robot traverses its own spanning tree to find an empty node to settle down. Assume the robots perform DFS tree traversal in parallel. When multiple robots move to an empty node, the minimum ID robot has the preference to settle down at that node. However, this simple idea will not work is in the presence of Byzantine robots and needs to be carefully adapted.

During the DFS tree traversal, a robot stores the IDs of the settled robots and the corresponding nodes. For this, each robot $r$ maintains an array $A_r$ of size $n$, where the entry $A_r[v_i]$ stores the IDs of the settled robots at the node $v_i$. Note that a node may have multiple settled robots, since the Byzantine robots may settle with a non-Byzantine robot (to interrupt the protocol). Initially, the array is empty. Recall that we assume the presence of {\em weak} Byzantine robots, i.e., the robots can act in a Byzantine manner by behaving maliciously to thwart the algorithm but they are weak in that they cannot fake their IDs. Thus, if a (non-Byzantine) robot encounters a settled robot, say, $\hat{r}$ at some different node, then it can surely blacklist $\hat{r}$ as a Byzantine robot. For this, each robot $r$ also maintains a set $B_r$ which contains the blacklisted Byzantine robots by $r$.\footnote{A robot may also detect a Byzantine robot by some other means, say, when a robot does not follow the protocol, e.g., a robot that does not transmit a message when it is supposed to do so.} Furthermore, each robot maintains a state which takes one of two values: (i) {\em Settled}, and (ii) {\em tobeSettled}. Each robot also maintains a binary $0-1$ flag. The initial state of a robot is tobeSettled and the flag is set to $0$. A robot may change its state to Settled at some node depending on the other co-located robots' IDs and the states. In the same way, whenever a robot moves to a node $v$, it may change its flag value to $1$, which indicates that the robot {\em intends} to settle at $v$. A settled robot never changes its position and the state, which is Settled.

Let us consider a robot $r$ and a node $v$ where $r$ enters $v$ in some round $t$ during its DFS tree traversal. We describe how robot $r$ determines whether it settles down at $v$ or continues the traversal. Let $S^s(v)$ be the set of all robots with state Settled, and $S^{tbs}(v)$ be the set of all robots with state tobeSettled at node $v$ in round $t$. We simply denote them by $S^s$ and $S^{tbs}$ when it is clear from the context. For simplicity of our algorithm, we assume that each round is broken down into $n$ sub-rounds. In the following procedure, we mention robots \textit{waiting} at a node and then making a decision. Formally, for the robots present at a node in a given round, there is a total order on their IDs. Each robot of rank $\mathcal{Y}$ waits until sub-round $\mathcal{Y}$ before it makes its decision. The procedure for $r$ at node $v$ (until $r$ settles down) is given below:

\begin{enumerate}
    \item If $S^s = \emptyset$ and, either $r$ is the minimum in $S^{tbs}$ or $S^{tbs} \setminus \{r\} \subseteq B_r$, then $r$ settles at $v$ changes its state to $Settled$.

    \item Else, if $S^s = \emptyset$ and, neither $r$ is minimum in $S^{tbs}$, nor $S^{tbs} \setminus \{r\} \subseteq B_r$, then:

        \subitem $\rhd$ a) If all the robots in $S^{tbs} \setminus \{r\}$, whose ID is smaller than $r$, are in $B_r$, then $r$ settles at $v$ and changes its state to $Settled$.

        \subitem $\rhd$ b) Else, $r$ sets its flag value to $1$ and checks if any other robots in $S^{tbs} \setminus B_r$ has flag value $1$. If NO, then $r$ settles at $v$ and changes its state to $Settled$. If YES,
        then $r$ waits
        and observes the smaller ID robots in $S^{tbs} \setminus B_r$ (other than $r$).
        If any of them changes its state to $Settled$, then $r$ stores the ID(s) of the settled robot(s) in $A_r[v]$ and moves to the next node following the DFS tree traversal.
        If no smaller ID robot changes its state to $Settled$,
        then $r$ settles at $v$ and changes its state to $Settled$.

        \item If $S^s \neq \emptyset$, then:

        \subitem $\rhd$ a) If $S^s \subseteq B_r$ and $r$ is the minimum in $S^{tbs} \setminus B_r$ then $r$ settles at $v$ and changes its state to $Settled$.

        \subitem  $\rhd$ b) Else, if $S^s \subseteq B_r$ and $r$ is NOT the minimum in $S^{tbs} \setminus B_r$, then: $r$ sets its flag value to $1$ and checks if any robot in $S^{tbs} \setminus B_r$ has flag value $1$. If NO, then $r$ settles at $v$ and changes its state as $Settled$. If YES,
        then $r$ waits
        and observes the smaller ID robots in $S^{tbs} \setminus B_r$.
        If any of them changes its state to $Settled$ then $r$ stores the settled robot(s) in $A_r[v]$ and moves to the next node following the DFS traversal. If no smaller ID robot changes its state to $Settled$,
        then $r$ settles at $v$ and changes its state to $Settled$.

        \subitem  $\rhd$ c) Else, if $S^s \nsubseteq B_r$, then $r$ stores the IDs of the settled robots from $S^s \setminus B_r$ in $A_r[v]$ and moves to the next node.

    \item If $r$ finds any robots $\hat{r} \in \cup_{\{v_i : v_i \neq v\}} A_r[v_i]$ at $v$ (i.e., settled earlier at some node before $v$), then $r$ blacklists $\hat{r}$ and stores it in $B_r$. Furthermore, $r$ also considers a robot $\hat{r}$ as Byzantine if $\hat{r}$ does not  transmit a message when it is supposed to do so.
\end{enumerate}
The above procedure is executed by all robots in parallel, when traversing their respective trees. We analyze the correctness of the procedure {\sc Dispersion-Using-Map} below. Let us first make the following Observation.

\begin{observation}\label{obs:initial-position}
If a single robot is positioned at a node, it settles at that node.
\end{observation}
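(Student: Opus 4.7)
The plan is to derive the observation as an immediate consequence of Case 1 of procedure {\sc Dispersion-Using-Map}. The hypothesis is that, at the start of the procedure, some node $v$ contains exactly one robot $r$ and no other robot (Byzantine or non-Byzantine) is co-located with $r$. Since each robot roots its own DFS tree at its current position, $r$ begins its traversal with $v$ as the first node examined.

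Next I would inspect the state of $v$ in this very first round of the procedure. Because all robots launch the procedure simultaneously (the paper assumes an awake, synchronous start), in the first round each robot sits at its own initial node. Consequently no robot other than $r$ can be at $v$ at this moment, and since no robot has had the opportunity to change its state yet, $S^s(v) = \emptyset$ and $S^{tbs}(v) = \{r\}$. The hypothesis of Case 1 --- namely $S^s = \emptyset$ together with $r$ being the minimum of $S^{tbs}$ --- is therefore trivially satisfied (vacuously, as $r$ is the only element of $S^{tbs}$), so $r$ settles at $v$ and updates its state to Settled.

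Finally, I would invoke the explicit clause from the procedure description that "a settled robot never changes its position and the state,'' which closes the argument: once $r$ has set its state to Settled at $v$ in the first round, it stays at $v$ for the remainder of the execution, irrespective of any future malicious behavior by Byzantine robots. There is no real obstacle here; the observation is essentially a sanity check that the base case of the case analysis fires in the degenerate single-robot situation, and it will be used later as a starting-configuration building block for the correctness arguments of the full algorithm.
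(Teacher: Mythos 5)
Your argument is correct and matches the paper's own justification, which simply notes that the observation follows from Step~1 of the procedure: with $r$ alone at $v$ you have $S^s=\emptyset$ and $S^{tbs}=\{r\}$, so Case~1 fires and $r$ settles (and the ``settled robots never move'' clause makes this permanent). The only nitpick is that $r$ being the minimum of the singleton $\{r\}$ is trivially true rather than vacuously true, but this does not affect correctness.
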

The above Observation follows from Step~1 of the procedure.

\begin{lemma}\label{lemma:different-blacklist}
A non-Byzantine robot can never be in the blacklisted set of any other non-Byzantine robots.
\end{lemma}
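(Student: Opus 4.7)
The plan is to examine precisely the two mechanisms by which a robot $r$ adds another robot $\hat r$ to $B_r$, as specified in Step~4 of {\sc Dispersion-Using-Map}: either (i) $r$ encounters $\hat r$ at some node $v$ while $\hat r$'s identifier already appears in $A_r[v_i]$ for some $v_i \neq v$, or (ii) $\hat r$ fails to transmit a message when the protocol requires it to do so. I will argue that, when $\hat r$ is non-Byzantine, neither condition can ever be triggered by a non-Byzantine $r$.

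Ruling out (ii) is immediate: a non-Byzantine robot follows {\sc Dispersion-Using-Map} exactly, so it transmits every message the procedure prescribes. The substantive work lies in ruling out (i), and for that I will trace exactly how an identifier enters some $A_r[v_i]$ entry. Inspection of Steps~2(b), 3(b), and 3(c) shows that $r$ writes $\hat r$'s identifier into $A_r[v_i]$ only during a round in which $r$ is co-located with $\hat r$ at node $v_i$ and observes $\hat r$ in state $Settled$. Because the Byzantine robots here are weak, no robot other than $\hat r$ can attach $\hat r$'s identifier to a message; hence the robot appearing as $\hat r$ in that round really is $\hat r$. Since $\hat r$ is non-Byzantine, its declared state is truthful, so $\hat r$ actually settles at $v_i$ in that round. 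By the protocol, a settled non-Byzantine robot never changes state or position thereafter, so $\hat r$ remains at $v_i$ for the remainder of the execution. Therefore $r$ can never subsequently observe $\hat r$ at a node $v \neq v_i$, and condition~(i) never fires.

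The one delicate point, and the step I expect to be the main obstacle, is justifying that the identifier recorded in $A_r[v_i]$ genuinely corresponds to the specific non-Byzantine $\hat r$ rather than to some spoofed imitation. This rests squarely on the weak Byzantine assumption that identifiers are unforgeable, combined with the fact that a non-Byzantine robot never misreports its own state. Once these two facts are granted, the chain ``$\hat r \in A_r[v_i]$'' $\Rightarrow$ ``$\hat r$ truly settled at $v_i$'' $\Rightarrow$ ``$\hat r$ is never seen at any $v \neq v_i$'' closes the argument and yields the lemma.
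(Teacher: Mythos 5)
Your proof is correct and follows essentially the same route as the paper's: both arguments reduce to the facts that a settled non-Byzantine robot never changes its position or state, that a non-Byzantine robot never deviates from the protocol, and that Step~4 is the only mechanism for blacklisting. Your version is simply more explicit about the weak-Byzantine (unforgeable ID) assumption, which the paper's terser proof uses implicitly.
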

\begin{sloppypar}
\begin{proof}
Let $r$ be  a non-Byzantine robot. For a non-Byzantine robot $r_i$, the blacklisted set $B_{r_i}$ contains the Byzantine robots encountered by $r_i$. We claim that $r \notin B_{r_i}$ for any non-Byzantine robot $r_i$. From Observation~\ref{obs:initial-position} and the fact that ``a settled robot never changes its position and state'', $r$ never changes its position and state once it is settled. Furthermore, a non-Byzantine robot never deviates from the algorithm. Thus, from Step~4 of the procedure, $r \notin B_{r_i}$ for any non-Byzantine robot $r_i$.
\end{proof}
\end{sloppypar}
The correctness of the procedure follows from the following lemma.

\begin{lemma}\label{lem:correctness-upto-n-1}
No two non-Byzantine robots settle at the same node $v$.
\end{lemma}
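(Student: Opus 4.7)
The plan is to argue by contradiction. Suppose two non-Byzantine robots $r_1$ and $r_2$ both settle at the same node $v$, in rounds $t_1$ and $t_2$ respectively, and without loss of generality take $t_1 \le t_2$; if $t_1 = t_2$, further assume $\text{ID}(r_1) < \text{ID}(r_2)$. The goal is to show that in either case $r_2$ would actually execute Step~3(c) of {\sc Dispersion-Using-Map} at $v$ and move on, contradicting the hypothesis that $r_2$ settles there. The engine of the argument is Lemma~\ref{lemma:different-blacklist}, which says no non-Byzantine robot ever lands in the blacklist of another non-Byzantine robot.

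First I would handle the case $t_1 < t_2$. Since a settled robot never moves or changes state, $r_1 \in S^s(v)$ in every round after $t_1$, in particular in round $t_2$ when $r_2$ executes the procedure at $v$. Because $r_1 \notin B_{r_2}$ by Lemma~\ref{lemma:different-blacklist}, we have $S^s(v) \nsubseteq B_{r_2}$, which rules out Steps~3(a) and 3(b) and forces $r_2$ into Step~3(c). In Step~3(c), $r_2$ records $r_1$ in $A_{r_2}[v]$ and moves to the next node of its DFS, so $r_2$ does not settle at $v$, a contradiction. Next I would handle $t_1 = t_2 = t$. Both robots are at $v$ in the tobeSettled state entering round $t$; by the sub-round convention, $r_1$'s lower rank (from its smaller ID) places its decision point strictly earlier than $r_2$'s within the round. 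Thus by the time $r_2$'s sub-round arrives, $r_1$ is already in $S^s(v)$, and the same argument as in the first case (Lemma~\ref{lemma:different-blacklist} plus Step~3(c)) applies to $r_2$, again a contradiction.

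The main obstacle I anticipate is the $t_1 = t_2$ subcase, specifically justifying that $r_1$ actually completes its transition to $Settled$ before $r_2$'s decision point, even when $r_1$ first enters the flag-waiting branch of Step~2(b) rather than settling outright. The key observation to push through is that Step~2 is only reachable while $S^s(v)$ is empty, so the flag loop is a race among tobeSettled robots that is ultimately decided by ID order; whichever robot commits first becomes visible as $Settled$ to every later-ranked non-Byzantine robot in the same round, and from that instant on the later-ranked robot is in Step~3 with $S^s \nsubseteq B_{\cdot}$. A small observation to record along the way is that communication between co-located robots exposes the true $Settled$ state of $r_1$ directly, so no amount of Byzantine lying by third parties can mask it from $r_2$.
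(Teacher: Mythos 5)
Your proof is correct and follows essentially the same route as the paper's: Lemma~\ref{lemma:different-blacklist} guarantees neither robot blacklists the other, the case where one robot is already settled is dispatched by the check in Step~3(c), and the simultaneous-arrival case is broken by ID order. The only cosmetic difference is that for the tie you re-evaluate $S^s(v)$ at $r_2$'s sub-round and land $r_2$ in Step~3(c), whereas the paper keeps $r_2$ in Step~2(b) and uses its flag-waiting clause --- both yield the same conclusion that $r_2$ records $r_1$ in $A_{r_2}[v]$ and moves on.
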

\begin{proof}
Let two non-Byzantine robots $r_1$ and $r_2$ be co-located at some node $v$. Then by Lemma~\ref{lemma:different-blacklist}, $r_1$ is not in the blacklist of $r_2$ and vice-versa. If one
of them is already settled at $v$, say $r_1$, then $r_2$ will
not settle there, because of the check in Step~3.c.
If neither of them is settled at $v$, then it is impossible for both of them to be settled at $v$ at the same time. Suppose the ID of $r_1$ smaller than $r_2$. Then by Step~2.b, $r_2$ does not settle at $v$.
\end{proof}

\begin{lemma}\label{lem:time-upto-n-1}
The procedure {\sc Dispersion-Using-Map} correctly disperses the robots in $O(n)$ time.
\end{lemma}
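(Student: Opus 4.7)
The plan is to separate the two halves of ``correctly disperses.'' Lemma~\ref{lem:correctness-upto-n-1} already shows that no two non-Byzantine robots settle at the same node, so what remains is to establish that (a) every non-Byzantine robot eventually settles, and (b) this happens within $O(n)$ rounds.

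For the time bound (b), I would observe that each robot $r$ moves only along its own precomputed spanning (DFS) tree on $n$ vertices; an Euler-tour style DFS traversal uses at most $2(n-1)$ edges before returning to the root, so $r$ makes $O(n)$ inter-node transitions. Within each round at a node, the ``waiting'' in Steps~2.b and~3.b is realized through the $n$ sub-rounds of that round: the robot of rank $\mathcal{Y}$ among those present defers its decision to sub-round $\mathcal{Y}$, and since no movement happens until the last sub-round, the whole decision at a node costs exactly one round. Consequently $r$ finishes its traversal, and any settlement decisions made along the way, in $O(n)$ rounds.

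For (a), the plan is to combine a pigeonhole argument with a case analysis over the branches of the procedure. Let $k_g$ denote the number of non-Byzantine robots and consider any non-Byzantine robot $r$. By Lemma~\ref{lem:correctness-upto-n-1}, at every moment at most $k_g - 1$ nodes carry a settled non-Byzantine robot different from $r$, while $r$'s DFS tree covers all $n \ge k_g$ nodes; hence $r$'s route visits at least one node $v^*$ that is never occupied by a settled non-Byzantine robot. The heart of the argument is then to verify that during some visit to such a $v^*$ one of Steps~1, 2.b, 3.a, or~3.b forces $r$ to settle: if $S^s(v^*)$ is empty when $r$ arrives, Steps~1 or~2.b apply; otherwise the settled robots at $v^*$ are Byzantine, and Lemma~\ref{lemma:different-blacklist} together with the monotonicity of $B_r$ under Step~4 puts all of them in $B_r$, so Steps~3.a or~3.b fire.

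The main obstacle, in my view, is justifying the ``eventually'' above while staying inside the $O(n)$-round budget: a Byzantine robot could in principle re-occupy $v^*$ after $r$ has left, so one must rule out a pathological schedule in which $r$ repeatedly finds $v^*$ held by fresh un-blacklisted settled robots. I would handle this by noting that (i) a robot that has claimed to settle at one node and is then seen elsewhere is permanently added to $B_r$ by Step~4, (ii) there are at most $n-1$ Byzantine robots, and (iii) the DFS visits only $O(n)$ edges. Together these ensure that either $r$ has already settled at an earlier opportunity, or on its last visit to $v^*$ every settled robot there is already in $B_r$. Combined with the $O(n)$ bound from (b), this completes the proof.
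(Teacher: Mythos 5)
Your time bound and your treatment of the ``waiting'' via sub-rounds match the paper: the DFS traversal over the known map costs $O(n)$ edge traversals and each per-node decision is absorbed into the $n$ sub-rounds of a single round, so the whole procedure runs in $O(n)$ rounds. The correctness half also correctly reduces to Lemma~\ref{lem:correctness-upto-n-1} plus a termination claim. The gap is in how you prove that every non-Byzantine robot $r$ actually settles. Your pigeonhole is taken over the \emph{non-Byzantine} robots: you produce a node $v^*$ that never hosts a settled non-Byzantine robot and then try to force $r$ to settle there. But nothing forces the settled robots at $v^*$ to be in $B_r$: a Byzantine robot that $r$ has never encountered before can declare itself Settled at $v^*$ and simply stay there forever. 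It is then never caught by Step~4 (it never appears at two distinct nodes in $r$'s records), so $S^s(v^*) \not\subseteq B_r$ and Step~3.c sends $r$ onward without settling. Your items (i)--(iii) do not close this: (i) only blacklists robots seen at two places, and (iii) gives up to $2(n-1)$ node-visits, so ``at most $n-1$ Byzantine robots'' does not bound the number of blocked visits. The fallback clause ``$r$ has already settled at an earlier opportunity'' is the true statement, but it is exactly what needs proof and your argument does not supply it.

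The paper's pigeonhole is taken over \emph{all} robots and avoids this. Whenever $r$ declines to settle at a node $v$ (Steps~2.b, 3.b, or 3.c), it records in $A_r[v]$ at least one settled robot drawn from $S^s \setminus B_r$ or $S^{tbs}\setminus B_r$; and these recorded robots are pairwise distinct across nodes, because a robot already recorded at some earlier node is blacklisted by Step~4 the moment it reappears, and blacklisted robots are never recorded again. Hence if $r$ failed to settle at all $n$ nodes of its traversal it would have recorded $n$ distinct settled robots other than itself, contradicting the fact that only $n-1$ other robots exist. A permanently squatting Byzantine robot is harmless under this count: it occupies one node but consumes one of the $n-1$ available identities. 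I suggest you replace your $v^*$ argument with this counting argument; the rest of your proposal then goes through.
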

\begin{proof}
The procedure solves Byzantine dispersion since at any node, at most one non-Byzantine robot settles (by Lemma~\ref{lem:correctness-upto-n-1}).

Since the normal DFS tree traversal takes at most $2n-1$ steps to visit all the nodes of the graph (when the map is known), it takes $O(n)$ rounds to visit all the nodes of the graph. A robot will find a node to settle down after visiting all the nodes of the graph. The reason is that whenever the robot visits a node $v$, it stores at least one settled robot's ID in its array $A[v]$ (if the robot itself does not settle at $v$). Moreover, a non-Byzantine robot never changes its settled position (node) and if a (settled) robot changes its position, it may be blacklisted. Hence, by the pigeonhole principle, the robot must find a node to settle down at after visiting all the $n$ nodes of the graph. Therefore, the procedure {\sc Dispersion-Using-Map} correctly disperses the robots and takes at most $O(n)$ rounds.
\end{proof}

\begin{theorem}\label{thm:main-upto-n-1}
There is an algorithm to solve Byzantine dispersion of $n$ robots, when up to $n-1$ of them are weak Byzantine robots, on an $n$ node graph in the class of graphs where the graph is isomorphic to its quotient graph. Moreover, the time complexity of the algorithm is polynomial in $n$.
\end{theorem}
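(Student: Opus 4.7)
The plan is to combine the two building blocks developed in the previous subsections into a single algorithm. First, every robot independently runs procedure \textsc{Find-Map}. By Corollary~\ref{cor:find-map}, after some number of rounds $T(n)$ polynomial in $n$, each non-Byzantine robot has constructed the quotient graph of $G$ starting from its own initial position. Under the hypothesis of the theorem, the quotient graph is isomorphic to $G$, so each non-Byzantine robot now possesses a correct map of $G$ with itself located at a known root node. Since \textsc{Find-Map} is executed by a single robot on its own (it does not rely on communication with other robots), the Byzantine robots cannot corrupt this step no matter how they behave.

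Next, at round $T(n)+1$, every robot starts procedure \textsc{Dispersion-Using-Map} simultaneously; this synchronization is free because the system is synchronous, each robot knows $n$, and $T(n)$ is a fixed polynomial bound known in advance. By Lemma~\ref{lem:time-upto-n-1}, this procedure correctly solves Byzantine dispersion in $O(n)$ additional rounds, using in particular Lemma~\ref{lem:correctness-upto-n-1} to guarantee that no two non-Byzantine robots settle at the same node. Crucially, the arguments in Lemmas~\ref{lemma:different-blacklist}, \ref{lem:correctness-upto-n-1}, and \ref{lem:time-upto-n-1} never rely on any bound on the number of Byzantine robots, so the combined algorithm tolerates up to $n-1$ of them.

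Putting the two phases together, each non-Byzantine robot terminates in a dispersed configuration after $T(n) + O(n)$ rounds, which is polynomial in $n$. The main conceptual point to highlight in the write-up is that the Byzantine robots are rendered harmless in two different ways in the two phases: in \textsc{Find-Map} they are irrelevant because a robot computes its map in isolation, and in \textsc{Dispersion-Using-Map} their malicious moves and messages are neutralized by the blacklisting mechanism of Step~4 together with the tie-breaking rules of Steps~1--3. The step that requires the most care is verifying that the phase boundary is well defined for every non-Byzantine robot; this reduces to observing that the polynomial running-time bound of Lemma~\ref{lem:quotient-graph} depends only on $n$, which every robot knows, so the round at which \textsc{Dispersion-Using-Map} begins is common knowledge among the non-Byzantine robots and no additional coordination is needed.
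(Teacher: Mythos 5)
Your proposal is correct and follows essentially the same route as the paper: run \textsc{Find-Map} independently per robot (so Byzantine robots cannot interfere), invoke Corollary~\ref{cor:find-map} and the isomorphism hypothesis to conclude each non-Byzantine robot holds a correct map, then run \textsc{Dispersion-Using-Map} and appeal to Lemma~\ref{lem:time-upto-n-1} for correctness and the $O(n)$ bound. Your extra remarks on the synchronized phase boundary and on why the lemmas never use a bound on $f$ are points the paper leaves implicit, but they do not change the argument.
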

\begin{proof}
\begin{sloppypar}
Each robot performs the two procedures {\sc Find-Map} and {\sc Dispersion-Using-Map}. Since each robots performs {\sc Find-Map} independently without relying on any information from the other robots or their movements, no amount of Byzantine robots can prevent a non-Byzantine robot from generating a quotient graph. Once the graph is created, by Lemma~\ref{lem:time-upto-n-1} we see that {\sc Dispersion-Using-Map} allows the non-Byzantine robots to solve Byzantine dispersion. Thus, the algorithm works in the presence of any number of Byzantine robots, i.e., $f \leq n-1$.
\end{sloppypar}

The time complexity follows from Corollary~\ref{cor:find-map}, which shows that{\sc Find-Map} takes polynomial in $n$ rounds, and Lemma~\ref{lem:time-upto-n-1}, which shows that {\sc Dispersion-Using-Map} takes $O(n)$ rounds. In fact, the polynomial term dominates the $O(n)$ bound.
\end{proof}

\begin{remark}[Memory per Robot]\label{rem:memory-upto-n-a}
\begin{sloppypar}
A robot stores the map of the graph, which requires $O(m\log n)$ bits, where $m$ is the number of the edges of the graph. It also stores the IDs of all the robot, which requires $O(n\log n)$ bits, assuming the IDs are at most polynomially large.  Other parameters (like state, synchronous round counter, etc.) take at most $O(\log n)$ bits. Thus, a robot requires $O(m\log n)$ bits of memory.
\end{sloppypar}
\end{remark}

\iffalse
\begin{remark}[Reducing the Running Time of {\sc Dispersion-Using-Map}]
The {\sc Dispersion-Using-Map} takes $O(n^2)$ rounds because each step of the DFS takes $n$ rounds. We may use the idea of \textit{breaking a round into sub-rounds}, as used in~\cite{tamc19}, to reduce this time to $O(n)$ rounds. The concept of breaking a round into sub-rounds is just to divide a round into smaller fragments of time, i.e., sub-rounds. Since each step of DFS required $n$ rounds, each of which only involved local computation and communication (only in the final round is there any movement), we only require a little bit of synchronization between different instances of the local computation and communication, which is achieved via these sub-rounds. Thus, we may simulate the local computation and communication needed in the $n$ rounds of each step of the DFS in $n$ sub-rounds, which can be computed as local computation in a single round. Notice that this is only possible because no movement is needed, as one instance of robot movement from one node to another requires one round.

\begin{corollary}\label{cor:faster-disp-using-map}
The procedure {\sc Dispersion-Using-Map} correctly disperses the robots in $O(n)$ time.
\end{corollary}
\end{remark}
\fi

\section{Byzantine Dispersion on General Graphs}
\label{sec:alg-general-graphs}
In this section we discuss solutions to the Byzantine dispersion problem on any port labeled graph $G$ with $n$ nodes. First we propose an algorithm that handles up to $\lfloor {n/2}-1 \rfloor$ Byzantine robots when robots are initially placed arbitrarily. The solution takes $O(n^4 |\Lambda_{good}| X(n))$ rounds. We also see that if the robots are initially gathered at some node, then the round complexity becomes $O(n^4)$ (see Section \ref{sec:n/2Byz}). Later, we show that if the number of Byzantine robots is bounded to be at most $\lfloor {n/3-1} \rfloor$, then Byzantine dispersion can be solved much faster in $O(n^3)$ rounds, provided that all the robots start from a gathered configuration (see Section~\ref{sec:n/3Byz}).\footnote{Note that our algorithms still work even if only the non-Byzantine robots are gathered in one node.} Finally, in Section~\ref{sec:sqrt(n)Byz}, we show that it is possible to solve Byzantine dispersion in $O((f +  |\Lambda_{all}|) X(n))$ rounds when robots start from an arbitrary initial configuration if we limit the number of Byzantine robots to be at most $O(\sqrt n)$.

All the algorithms have a similar outline. In Phase~1, we first gather the non-Byzantine robots using existing algorithms \cite{DPP14,HTNOI20} in case the robots do not start in a gathered configuration. Then in Phase~2, robots construct a map of the graph using the techniques that we discuss in this section. In fact, we have different map finding techniques based on different upper bounds on the number of Byzantine robots.
Finally, robots disperse with the help of the map using our procedure {\sc Dispersion-Using-Map} (refer Section \ref{sec:dispersion-using-map}) in Phase~3.

\subsection{$\lfloor {n/2}-1 \rfloor$ Byzantine robots}\label{sec:n/2Byz}
In this section we discuss gathering of $\lfloor {n/2}-1 \rfloor$ Byzantine robots starting from any arbitrary initial configuration.\\
\\
\noindent\textbf{Phase 1 (gathering):} All the non-Byzantine robots are gathered in some node using the existing algorithm in \cite{DPP14} that handles any number of weak Byzantine robots on any graph of known size $n$. Their algorithm requires $4n^4\textbf{P}(n,|\Lambda_{good}|)$ rounds to terminate for the non-Byzantine robots where $|\Lambda_{good}|$ is the length of the largest ID of any non-Byzantine robot. This is equivalent to $O(n^4|\Lambda_{good}| X(n))$ (see~\cite{HTNOI20}), where $X(n)$ is the number of rounds needed to explore any graph of size $n$.
In our case the largest ID of a non-Byzantine robot can be as large as $n^c$ for some constant $c>1$ (i.e., length is $O(\log n)$) and hence all non-Byzantine robots gather and terminate in $O(n^4|\Lambda_{good}| X(n))$ rounds in this phase. The exact number of rounds before all non-Byzantine robots terminate in \cite{DPP14} depends on the size of the graph $n$. Let this gathering take $T_1$ rounds. All non-Byzantine robots wait for $T_1$ rounds for the gathering algorithm to be complete.\\
\\
\noindent\textbf{Phase 2 (map finding):} Phase two starts after $T_1$ rounds have been completed from the beginning of the algorithm. In this stage, we have robots pair up in order to find a map. Suppose two robots work together to run the exploration algorithm with movable token from \cite{DPP14}. One robot acts as an agent and the other acts as a movable token. It is possible for the pair to compute an isomorphic map of the underlying graph in $O(n^3)$ rounds if both $i$ and $j$ are non-Byzantine. Let $T_2$ ($ = O(n^3)$) be an upper bound on the time taken for two non-Byzantine robots starting at a given node to perform the algorithm from~\cite{DPP14} and subsequently return to the starting node.\footnote{All non-Byzantine robots remember the port numbers they traveled through during this map finding subroutine. After constructing the map, the robots may use those port numbers to get back to the node where they were gathered at the beginning of this phase. This may take at most another $O(n^3)$ rounds.} Each robot $R$ pairs up with every other robot to form a map.
$R$ then takes the map formed by the majority of pairings and uses it as the map of the graph.

We now develop an algorithm to have every robot pair up with every other robot (and perform the map finding algorithm) in $O(n^4)$ rounds.
The algorithm proceeds in $\lceil \log n \rceil$ stages. Let all the robots initially form a large group. In the first stage, we divide the group into two subgroups of equal size (if they are unequal, we add a dummy robot to the smaller subgroup to make them equal) and have them form pairs with each other until every robot in each subgroup has paired with every robot in the other subgroup. In the next stage, each of these subgroups is further divided into equal size subgroups and the process is repeated. Eventually, all subgroups will contain one robot and no further division is possible. Formally, consider a group of robots $G$ (possibly one of several) present at the beginning of stage $i$. The robots in group $G$ are divided into groups $G^0$ and $G^1$ of sizes $\lceil G/2 \rceil$ and $\lfloor G/2 \rfloor$ respectively. If $|G^0| \neq |G^1|$, add a dummy value to $G^1$ so that they are equal in size. Let the robots of $G^0$ be $G^0_1, G^0_2, \ldots, G^0_{\lceil G/2 \rceil}$ and those of $G^1$ be $G^1_1, G^1_2, \ldots, G^1_{\lceil G/2 \rceil}$. Now, do the following for the next $(n/2^i + 2)T_2$ rounds: from round $j T_2 + 1$ to round $(j+1)T_2$, have each robot $G^0_x$ pair up with robot $G^1_{x+j}$ and run the algorithm from~\cite{DPP14} and return to the start node.\footnote{If one of the robots in a pairing is Byzantine, it may not correctly follow the algorithm. To prevent this affecting non-Byzantine robots, each robot keeps track of how many rounds are needed to run~\cite{DPP14}. After this many rounds, the robot automatically stops running the algorithm and returns to the start node. Also, if all possible pairings are done before the end of the stage, the robot waits at the start node until the next stage begins.}

Let us say that phase 2 terminates after $T_3$ rounds from the beginning of the algorithm.\\

\noindent\textbf{Phase 3 (dispersion):} Phase 3 starts after $T_3$ rounds are over from the beginning of the algorithm. In this phase, robots disperse using our procedure {\sc Dispersion-Using-Map} (refer to Lemma~\ref{lem:time-upto-n-1}) in $O(n)$ rounds using the isomorphic maps computed in phase 2.\\
\\
\noindent\textbf{Correctness:} The correctness of phase 1 follows from~\cite{DPP14} and the correctness of phase 3, assuming robots have an isomorphic map of the graph, follows from Lemma~\ref{lem:time-upto-n-1}. Thus, we only need to prove that in phase 2, each robot computes a correct map.

We show that the algorithm for pairing robots pairs every robot with every other robot and the total time taken is $O(n^4)$ rounds. In each stage, it is easy to see that for a given group that is divided into two subgroups, each robot of each group pairs up with each robot of the other group. In $\lceil \log n \rceil$ stages, each group will reach the size of $1$, thus each robot will have paired with every other robot. As for time, stage $i$ takes $n/2^i + 2)T_2$ rounds. So all $\lceil \log n \rceil$ stages take $O(n + \log n)T_2 = O(n^4)$ rounds.

Of these pairs of robots that perform map finding, call a pair a \textit{good pair} if both the robots are non-Byzantine, else call it a \textit{bad pair}. As the number of Byzantine robots are at most $\lfloor {n/2}-1 \rfloor$, the number $(n-f)$ of possible good pairs involving any robot is at least one more than the number of bad pairs. In turn each non-Byzantine robot computes at least $(n-f)$ maps which are isomorphic to the actual graph $G$. As all these $(n-f)$ maps, $f \le \lfloor {n/2}-1 \rfloor$,  are isomorphic to each other, each robot can find the correct map isomorphic to $G$ by using a majority rule over all the  maps (at most $n-1$) it generated in phase 2 of the algorithm.

As the overall round complexity is dominated by phase 1 of the algorithm, we have the following theorem.
\begin{theorem}\label{theorem:n/2}
\begin{sloppypar}
Byzantine dispersion can be solved in $O(n^4 |\Lambda_{good}| X(n))$ synchronous rounds in the presence of at most $\lfloor {n/2}-1 \rfloor$ weak Byzantine robots when robots start from any arbitrary initial configuration.
\end{sloppypar}
\end{theorem}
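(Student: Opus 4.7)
The plan is to prove the theorem by verifying correctness and bounding the round complexity of each of the three phases separately, and then arguing that the whole pipeline composes correctly. Since phases are executed sequentially with synchronization by the precomputed bounds $T_1$ and $T_3$, each non-Byzantine robot enters a phase with the guarantees of the previous phase already established, so correctness can be argued phase-by-phase.

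For Phase~1, I would simply invoke \cite{DPP14}: since the robots know $n$, the algorithm there gathers all non-Byzantine robots at a single node and makes them terminate in at most $4n^4 \textbf{P}(n,|\Lambda_{good}|) = O(n^4 |\Lambda_{good}| X(n))$ rounds, so setting $T_1$ to this bound ensures all non-Byzantine robots are co-located before Phase~2 starts. For Phase~3, once every non-Byzantine robot holds an isomorphic copy of $G$, Lemma~\ref{lem:time-upto-n-1} immediately yields correctness of the dispersion step in $O(n)$ additional rounds. Thus the substantive task is to prove Phase~2, namely that every non-Byzantine robot ends Phase~2 holding an isomorphic map of $G$, and that this takes $O(n^4)$ rounds.

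The key step, and the main obstacle, is showing that the recursive pairing schedule makes every pair of non-Byzantine robots meet, and that the majority vote over the resulting maps is correct. I would first argue by induction on the stage index $i$ that after stage $i$ of the schedule, every two robots that still lie in a common group at the start of stage $i$ have been paired up during or before stage $i-1$ (the base case and inductive step reduce to the obvious fact that, within a single stage, each robot of $G^0$ meets each robot of $G^1$ via the cyclic shift $x \mapsto x+j$ over $|G^0|$ values of $j$, with two extra rounds absorbing the parity/dummy slack). Since the recursion has depth $\lceil \log n \rceil$ and every pair ends up separated at some stage, all $\binom{n}{2}$ pairings happen. Summing $(n/2^i + 2)T_2$ over $i = 1,\ldots,\lceil \log n \rceil$ with $T_2 = O(n^3)$ gives $O(n \cdot T_2) = O(n^4)$ rounds for Phase~2. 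The correctness of the majority vote then follows from the counting argument already hinted at in the text: for a non-Byzantine robot $R$, the pairings with the remaining $n-1$ robots yield at least $n-1-f$ good pairs (both endpoints non-Byzantine) and at most $f$ bad pairs; the standard \cite{DPP14} token-based exploration procedure guarantees that every good pair produces an isomorphic copy of $G$, so all good-pair maps coincide up to isomorphism, while bad-pair outputs can be arbitrary. The bound $f \le \lfloor n/2 - 1\rfloor$ gives $n-1-f \ge f+1$, so the isomorphic map strictly dominates and the majority rule selects a correct map at every non-Byzantine robot.

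Finally, for the overall round count I would note that the three phases take $T_1 = O(n^4 |\Lambda_{good}| X(n))$, $T_3 - T_1 = O(n^4)$, and $O(n)$ rounds respectively; the first term dominates, giving the claimed $O(n^4 |\Lambda_{good}| X(n))$ bound. One subtlety that I would explicitly handle is that a Byzantine partner in Phase~2 may refuse to return to the start node or attempt to stall; this is why each robot independently tracks a deterministic per-pairing timer $T_2$ and unconditionally returns after $T_2$ rounds, so that a misbehaving partner can spoil at most the one map produced in that pairing without derailing the schedule --- an observation that makes the stage-by-stage induction on pairings go through even in the presence of adversarial partners.
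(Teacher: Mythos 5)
Your proposal follows essentially the same route as the paper's own proof: gather all non-Byzantine robots via the algorithm of \cite{DPP14} in $O(n^4|\Lambda_{good}|X(n))$ rounds, run the recursive halving/cyclic-shift schedule so that every robot pairs with every other robot in $\sum_i (n/2^i+2)T_2 = O(n^4)$ rounds, take a majority over the at least $n-1-f \ge f+1$ good-pair maps, and finish with {\sc Dispersion-Using-Map}; the timer-based handling of stalling Byzantine partners you mention is exactly the paper's footnoted fix. The one quibble is that your stated induction invariant is inverted --- two robots are paired in the stage at which they are first \emph{separated} into different subgroups, not while they still share a group --- but your intended argument and the counting are correct and match the paper's (terser) presentation.
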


\noindent \textbf{Initially gathered:} Note that, if robots starts from a gathering configuration, then they only run phase 2 and phase 3 of Section \ref{sec:n/2Byz} to disperse. As phase 1 is not required, the round complexity becomes $O(n^4)$. We have the following theorem.
\begin{theorem}\label{theorem:n/21}
Byzantine dispersion can be solved in $O(n^4)$ synchronous rounds in the presence of at most $\lfloor {n/2}-1 \rfloor$ weak Byzantine robots when all robots are initially in a gathered configuration.
\end{theorem}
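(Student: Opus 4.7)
The plan is to instantiate the three-phase template from Theorem~\ref{theorem:n/2}, but observe that when all robots are already co-located on the same node, Phase~1 (the gathering phase that invokes the algorithm of \cite{DPP14}) is vacuous and may simply be skipped. This is exactly where the $O(n^4 |\Lambda_{good}| X(n))$ bottleneck lives in Theorem~\ref{theorem:n/2}, so eliminating it is what brings the round complexity down to $O(n^4)$. Formally, I would set $T_1 = 0$, declare that Phase~2 begins at round~$1$, and reuse the rest of the construction verbatim.

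First I would invoke the pairing schedule from the proof of Theorem~\ref{theorem:n/2} directly, starting from the initial gathered node rather than from the post-gathering node. The schedule recursively splits the group of $n$ robots into halves over $\lceil \log n \rceil$ stages; within stage $i$ every robot in one half pairs up with every robot in the other half, each pairing executes the robot-plus-token exploration procedure of \cite{DPP14} for at most $T_2 = O(n^3)$ rounds, and then returns to the gathering node. The correctness of ``every robot is paired with every other robot'' and the time bound $\sum_i (n/2^i + 2)\,T_2 = O(n \cdot n^3) = O(n^4)$ rounds carry over unchanged, since they do not depend on how the initial gathering was achieved.

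Next, I would invoke the same majority argument as in Theorem~\ref{theorem:n/2}: each non-Byzantine robot $R$ participates in $n-1$ pairings, of which at least $n-1-f \ge n/2$ involve another non-Byzantine partner and therefore yield a map isomorphic to $G$ (by the correctness of~\cite{DPP14}); since $f \le \lfloor n/2 - 1\rfloor$, the Byzantine-corrupted maps form a strict minority, so the map output by the majority rule at $R$ is isomorphic to $G$. Then Phase~3 runs {\sc Dispersion-Using-Map} starting from the gathering node, and by Lemma~\ref{lem:time-upto-n-1} solves Byzantine dispersion in $O(n)$ additional rounds.

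Summing the two non-trivial phases gives $O(n^4) + O(n) = O(n^4)$ rounds, as claimed. There is no real obstacle here beyond bookkeeping; the only subtle point is ensuring that Byzantine partners cannot stall an honest robot during a pairing, which is already handled by the time-out mechanism noted in the footnote of Phase~2 (each non-Byzantine robot unilaterally aborts after $T_2$ rounds and walks back to the gathering node using its recorded port sequence), so no additional argument is needed.
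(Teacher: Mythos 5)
Your proposal is correct and matches the paper's own argument, which likewise just observes that the gathered start makes Phase~1 unnecessary and that Phases~2 and~3 of the $\lfloor n/2-1\rfloor$ construction cost $O(n^4)$ and $O(n)$ rounds respectively. Your write-up is in fact more detailed than the paper's (which disposes of this in two sentences), and your count of at least $n-1-f \ge n/2$ good pairings versus at most $f \le n/2-1$ bad ones is a slightly cleaner version of the majority argument the paper reuses from Theorem~\ref{theorem:n/2}.
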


\subsection{$\lfloor {n/3-1} \rfloor$ Byzantine robots: A faster algorithm}\label{sec:n/3Byz}
If initially the non-Byzantine robots are gathered and there are at most $\lfloor {n/3-1} \rfloor$ Byzantine robots, then we can generate a faster algorithm to solve dispersion. By modifying the map finding subroutine as described below, we have an algorithm that solves Byzantine dispersion in $O(n^3)$ rounds.\\
\\
\noindent\textbf{Phase 1 (map finding):}
Let $k$ robots (including at least $(n-f)$ non-Byzantine robots) be gathered at some node $s$, so $k\ge n-f$.
The robots make 3 groups of at least $\lfloor k/3 \rfloor$ robots each, namely $A$, $B$ and $C$.
Without loss of generality, let the $\lfloor k/3 \rfloor$ smallest ID robots be in group $A$. From the remaining robots, let the smallest $\lfloor k/3 \rfloor$ ID robots form group $B$ and the remaining robots form group $C$.
Each robot remembers which robot belongs to which group among those $k$ gathered robots. As each robot remembers the IDs of the other $k-1$ initially gathered robots, if it interacts with a robot with an ID from outside this set of IDs, it can identify that robot as Byzantine.

Again the map finding algorithm with movable token (see \cite{DPP14}) is used. In each run of the map finding subroutine, one group of robots acts as an agent and another two groups of robots act together as a token. The robots that are acting as a token {\it believe} a group of at least $\lfloor k/6+1 \rfloor$ robots (i.e., more than half of the robots) from the group of robots that are working as the agent. Here, believe means that the robots acting as the token consider the instructions from such a group  of robots as the instruction from the agent and work as per those instructions.  On the other hand, the agent believes only a group of at least $\lfloor k/3+1 \rfloor$ robots (i.e., more than $\lfloor k/3 - 1 \rfloor$ many robots) from the group of robots that are working as a token.

\begin{sloppypar}
First $A$ works as an agent and $B\cup C$ works as a token (this group is formed in one round). The agent does the map finding with the help of the token. This takes $T_2$ rounds as in phase 2 of Section \ref{sec:n/2Byz}.
All non-Byzantine robots remember the port numbers they traveled through during this map finding subroutine. Using these port numbers, they return to the node where they were gathered at the beginning of this phase. This may take another $T_2$ rounds. Then the group $A\cup C$ is formed in one round. So from the $(2 T_2 + 2)$-th round onwards,  $A\cup C$ works as token and $B$ acts as an agent. After another $2T_2$ rounds, finally $B\cup A$ works as a token (takes one round to form this group) and $C$ acts as an agent. From round $4 T_2 + 4$ to round $6 T_2+3$, $C$ does the map finding with the help of $B \cup A$.
\end{sloppypar}

So each non-Byzantine robot computes at most three maps of the graph among which at least two are isomorphic to each other. Non-Byzantine robots choose one from those isomorphic maps. Since the map finding subroutine runs only thrice, the round complexity of this phase is at most $6T_2 +3$ which is of $O(n^3)$.  Let this phase terminate after $T'_3$ rounds from beginning of the algorithm.\\
\\
\noindent\textbf{Phase 2 (dispersion):} This phase starts in the $(T'_3+1)$-th round. In this phase, robots disperse using the procedure {\sc Dispersion-Using-Map} (refer to Lemma~\ref{lem:time-upto-n-1}) in $O(n)$ rounds using the isomorphic map of $G$ computed in phase 1.\\
\\
\noindent\textbf{Correctness:} We need to only prove the correctness of phase 1 as the correctness of phase 2, assuming each robot has access to an isomorphic map of the graph, was proved in Lemma~\ref{lem:time-upto-n-1}. Each robot remembers the set of other $k-1$ IDs of the collocated robots and also remembers which robot belongs to which group among those $k$ gathered robots.
If it interacts with a robot with an ID from outside the set of these $k-1$ IDs (i.e., an ID of one of the $n-k$ robots not present), it can identify that robot as Byzantine. Hence do not have any impact on the algorithm hereafter.

Observe that there cannot be more than $\lfloor k/3-1 \rfloor$ Byzantine robots among those $k$ robots. Hence only one group can have more than or equal to $\lfloor k/6+1 \rfloor$ many Byzantine robots. Because if two groups have $\lfloor k/6+1 \rfloor$ Byzantine robots each, then this contradicts the fact that the total number of Byzantine robots are less than or equal to  $\lfloor k/3 - 1 \rfloor$ among those $k$ robots.

Let $A$ have $\lfloor k/6+1 \rfloor$ or more Byzantine robots in it. Those Byzantine robots can only affect the map finding algorithm when $A$ works as an agent since the majority of the robots in $A$ are Byzantine. In the other two cases, both the group acting as an agent and the combined group acting as a token have a majority of non-Byzantine robots. As an example, when $B$ works as an agent and $A\cup C$ works as a token, then the majority of robots acting as an agent as well as those acting as a token are non-Byzantine. Whenever more than equal to $\lfloor k/6+1 \rfloor$ robots of group $B$ instruct the robots in $A\cup C$, the token will move. Note that, if some Byzantine robots change groups and join the Byzantine robots in $B$, making the number of Byzantine robots more than $\lfloor k/6+1 \rfloor$, then the non-Byzantine robots in $A\cup C$ understand it as each robot knows who is in which group. So, non-Byzantine robots of $A\cup C$ believe a group of $\lfloor k/6+1 \rfloor$ or more robots from group $B$ only.  Also, in  $A\cup C$ there are more than $\lfloor k/3+1 \rfloor$ non-Byzantine robots and the non-Byzantine robots of $B$ believe only such a group as token that is consisting of more than equal to $\lfloor k/3+1 \rfloor$ robots from $A\cup C$. So, the non-Byzantine robots of $B$ correctly find an isomorphic map of the graph and pass this information to other robots such that robots in $A\cup C$ also have this map. So, among three runs of the map finding subroutine, for sure in two instances, the non-Byzantine robots find a correct isomorphic map of the graph. Hence all the non-Byzantine robots would be able to choose a correct map by applying a majority rule to the three maps they have computed.

As the overall round complexity is dominated by phase 1 of the algorithm, we have the following theorem.
\begin{theorem}\label{theorem:n/3}
Byzantine dispersion can be solved in $O(n^3)$ synchronous rounds in the presence of at most $\lfloor {n/3}-1 \rfloor$ weak Byzantine robots when all robots start in a gathered configuration.
\end{theorem}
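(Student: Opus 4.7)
The plan is to decompose the proof into three standard components: correctness of Phase~1 (map finding), correctness of Phase~2 (which follows directly from Lemma~\ref{lem:time-upto-n-1}), and the running time bound. Since Phase~2 already has an established correctness and $O(n)$ time bound, the real work lies in arguing that each non-Byzantine robot ends Phase~1 holding an isomorphic copy of $G$, and then confirming that the total cost is dominated by the three invocations of the $O(n^3)$ token-based map-finding subroutine.

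The crux of Phase~1 correctness is a pigeonhole argument on the Byzantine census across the three groups. Since there are at most $\lfloor k/3 - 1 \rfloor$ Byzantine robots among the $k$ gathered robots, at most one of the three groups $A,B,C$ can contain $\lfloor k/6 + 1\rfloor$ or more Byzantine robots; otherwise two such groups would already supply at least $2\lfloor k/6+1\rfloor \ge \lfloor k/3 + 1 \rfloor$ Byzantine robots, contradicting the bound. Without loss of generality, suppose $A$ is that ``bad'' group, so $B$ and $C$ each contain a strict majority of non-Byzantine robots. I would then fix one of the two runs in which $A$ does \emph{not} act as the agent (say, $B$ as agent and $A\cup C$ as token) and show that the run faithfully simulates the non-faulty execution of the map-finding algorithm from~\cite{DPP14}.

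For that simulation argument, I would invoke the two trust thresholds built into the protocol. The token-side robots in $A\cup C$ execute an instruction only when at least $\lfloor k/6+1\rfloor$ robots from the agent group $B$ concur; since $B$ has a non-Byzantine majority, every honest instruction is ratified and no dishonest one can be, because the Byzantine robots inside $B$ cannot alone reach that threshold. Symmetrically, the agent-side non-Byzantine robots in $B$ act on the reply of the token only when at least $\lfloor k/3+1\rfloor$ robots of $A\cup C$ agree; since $A\cup C$ contains strictly more than $\lfloor k/3 - 1\rfloor$ non-Byzantine robots and the Byzantine ones cannot meet the threshold, the agents correctly interpret the token's position. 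Here I would also note the argument already sketched in the excerpt: because every robot memorised the group membership of the $k$ originally gathered robots, any Byzantine robot ``defecting'' to inflate the vote in another group is detected and discarded; and any stranger with an ID outside the gathered set is immediately tagged as Byzantine. Consequently this run outputs an isomorphic copy of~$G$ to every non-Byzantine robot; the same argument applies to the second ``good'' run, so each non-Byzantine robot finishes Phase~1 with at least two matching maps among the (at most three) it has produced, and a majority vote selects a correct isomorphic map. The subtlest point in this step, and what I expect to be the main obstacle, is making the informal ``believing'' semantics precise, in particular handling what a non-Byzantine robot does when neither threshold is met during a run; my plan is to show that such a deadlock cannot arise in a good run by the above vote counts, and in a bad run the robot simply discards the partially computed map (it will still have two good maps to vote on).

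For the running-time bound, Phase~1 consists of exactly three invocations of the token-based exploration of~\cite{DPP14} plus the return trips to the gathering node, each costing $O(n^3)$ rounds, together with a constant number of single-round regroupings, for a total of $6T_2 + 3 = O(n^3)$ rounds. Phase~2 adds another $O(n)$ rounds by Lemma~\ref{lem:time-upto-n-1}. Summing, the algorithm terminates in $O(n^3)$ rounds, and correctness follows by combining the Phase~1 map correctness with Lemma~\ref{lem:correctness-upto-n-1} for Phase~2, which establishes the theorem.
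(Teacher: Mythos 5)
Your proposal is correct and follows essentially the same route as the paper's argument: the same three-group construction with the $\lfloor k/6+1\rfloor$ and $\lfloor k/3+1\rfloor$ trust thresholds, the same pigeonhole claim that at most one group can be Byzantine-majority so at least two of the three token-based runs yield a correct isomorphic map, majority vote over the three maps, and the $6T_2+3=O(n^3)$ accounting plus $O(n)$ for {\sc Dispersion-Using-Map}. Your added remark about handling the case where a trust threshold is never met in the corrupted run is a reasonable refinement the paper leaves implicit, but it does not change the structure of the proof.
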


\subsection{$O(\sqrt n)$ Byzantine robots}\label{sec:sqrt(n)Byz}
When the number of Byzantine robots is known to be at most $O(\sqrt n)$, it is possible to solve Byzantine dispersion in $O((f +  |\Lambda_{all}|) X(n))$ rounds even if robots start from an arbitrary initial configuration. This is due to the gathering algorithm of \cite{HTNOI20}.
\\

\noindent\textbf{Phase 1 (gathering):} Use the gathering algorithm of \cite{HTNOI20}, which takes $O((f +  |\Lambda_{all}|) X(n))$ time to gather. It requires that all non-Byzantine robots start the algorithm at the same time, which is the case here.\\
\\
\noindent\textbf{Phase 2 (map finding):} This is similar to phase 1 (map finding) of Section \ref{sec:n/3Byz}.  After phase 1 (gathering), some $n - f \leq k \leq n$ robots gathered at one node including all the non-Byzantine robots. In this situation, we can create two groups of $\lfloor k/2 \rfloor$ and  $\lceil k/2 \rceil$ robots. One group works as agent and another group works as movable token. As there are enough non-Byzantine robots in both the groups, each non-Byzantine robot finds a correct map after one run of the map finding algorithm using exploration with movable token from \cite{DPP14}. Thus this phase takes $O(n^3)$ rounds.\\

\noindent\textbf{Phase 3 (dispersion):} In this phase, robots disperse using the procedure {\sc Dispersion-Using-Map} (refer to Lemma~\ref{lem:time-upto-n-1}) in $O(n)$ rounds using the isomorphic map of the graph computed in phase 2.\\
\\
\noindent\textbf{Correctness:} No correctness proof is required in this case as we already have correctness proofs of phase 1 from~\cite{HTNOI20}, phase 1 from Section~\ref{sec:n/3Byz}, and
the procedure {\sc Dispersion-Using-Map} from Lemma~\ref{lem:time-upto-n-1}.

As the overall round complexity is dominated by phase 1 of the algorithm, we have the following theorem.
\begin{theorem}\label{theorem:sqrtn}
\begin{sloppypar}
Byzantine dispersion can be solved in $O((f +  |\Lambda_{all}|) X(n))$ synchronous rounds in the presence of at most $O(\sqrt n)$ weak Byzantine robots when the robots start from any arbitrary initial configuration.
\end{sloppypar}
\end{theorem}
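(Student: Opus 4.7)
The plan is to compose three independently-analyzed building blocks in sequence and argue that the asymptotic cost is dominated by the first one. In Phase~1, I invoke the gathering algorithm of Hirose et al.~\cite{HTNOI20}, which under the restriction $f = O(\sqrt n)$ gathers all non-Byzantine robots at a single node in $O((f + |\Lambda_{all}|) X(n))$ synchronous rounds; I rely on the fact that the algorithm requires all non-Byzantine robots to start simultaneously, which holds in our model since all robots are awake from round~$1$. Phase~2 then begins at the common node holding some $k$ robots with $n - f \le k \le n$, all non-Byzantine robots being among them, and Phase~3 finishes the job using {\sc Dispersion-Using-Map}.

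For Phase~2 I follow the skeleton of Phase~1 of Section~\ref{sec:n/3Byz}, but specialized to the $O(\sqrt n)$ fault regime: the gathered robots partition themselves deterministically (by ID) into two groups of sizes $\lfloor k/2 \rfloor$ and $\lceil k/2 \rceil$, one acting as the agent and the other as the movable token in the exploration-with-token protocol of~\cite{DPP14}. I then run that protocol once, which takes $O(n^3)$ rounds and produces a map at every non-Byzantine participant; after this a return to the gathering node (using the stored port sequence) takes at most another $O(n^3)$ rounds. Phase~3 just applies Lemma~\ref{lem:time-upto-n-1} for $O(n)$ additional rounds.

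The main obstacle, and the part I would actually have to argue, is that a single run of the map-finding subroutine in Phase~2 really does deliver a correct isomorphic map to every non-Byzantine robot. The key observation is that since $f = O(\sqrt n)$, for all sufficiently large $n$ we have $f < \lfloor k/2 \rfloor / 2$, so in both groups the non-Byzantine robots form a strict majority; concretely, each group contains at least $\lfloor k/2 \rfloor - f \ge k/2 - O(\sqrt n)$ non-Byzantine robots versus at most $O(\sqrt n)$ Byzantine ones. This means the agent-side majority and the token-side majority used by~\cite{DPP14} are both attained by non-Byzantine robots, so the Byzantine robots in either group cannot force an incorrect traversal or a corrupted construction; a single execution therefore suffices and no outer majority vote over multiple runs is needed.

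Correctness then follows by stitching together the three guarantees: Phase~1 from~\cite{HTNOI20}, Phase~2 from the argument above combined with the correctness of exploration with a movable token in~\cite{DPP14}, and Phase~3 from Lemma~\ref{lem:time-upto-n-1}. For the round complexity, Phase~1 costs $O((f + |\Lambda_{all}|) X(n))$, Phase~2 costs $O(n^3)$, and Phase~3 costs $O(n)$; since $X(n) = \tilde\Theta(n^5)$, the first term dominates, yielding the claimed bound. The total therefore remains $O((f + |\Lambda_{all}|) X(n))$, completing the proof.
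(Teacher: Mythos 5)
Your proposal matches the paper's proof essentially step for step: gather via Hirose et al.~\cite{HTNOI20} in $O((f+|\Lambda_{all}|)X(n))$ rounds, split the $k$ gathered robots into two halves acting as agent and movable token for a single run of the exploration-with-token protocol of~\cite{DPP14}, and finish with {\sc Dispersion-Using-Map}, with Phase~1 dominating the round complexity. Your explicit majority calculation ($f<\lfloor k/2\rfloor/2$ for large $n$, so non-Byzantine robots form a strict majority in both groups and one map-finding run suffices) is in fact a slightly more careful justification than the paper's one-line remark that ``there are enough non-Byzantine robots in both the groups,'' but it is the same argument.
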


\section{Handling Strong Byzantine Robots}
\label{sec:alg-strong-byz-robots}
In this section we show that it is possible to solve Byzantine dispersion in the presence of at most $\lfloor{n/4 -1}\rfloor$ strong Byzantine robots in polynomial time if all robots are initially in a gathered configuration. We describe our algorithm below. Note that although we assume all $n$ robots are initially gathered, we describe the algorithm as if only a subset of the $n$ robots, including all non-Byzantine robots, are initially gathered. This is so that our algorithm can be paired with another algorithm later on which allows us to solve Byzantine dispersion from any arbitrary initial configuration.\\
\\
\noindent\textbf{Phase 1 (map finding):}
Let $k$ robots, including all non-Byzantine robots, be gathered at some node. Since there are at least $(n-f)$ non-Byzantine robots, $k\ge n-f$. Each robot remembers the IDs of the remaining $k-1$ gathered robots. If it interacts with a robot with an ID from outside the set of these $k-1$ IDs (i.e., an ID of one of the $n-k$ robots not present), it can identify that robot as Byzantine. The robots make an ordering of those $k$ IDs in increasing order such that the smallest ID robot is assigned to 1 and the largest ID robot assigned to $k$ in this order.

The robots make 2 groups, $A$ and $B$, of at least $\lfloor k/2 \rfloor$ robots each. The $\lfloor k/2 \rfloor$ smallest ID robots are in group $A$ and the remaining robots make group $B$. Each robot also remembers which robot belongs to which group.

Again, the map finding algorithm with movable token from \cite{DPP14} is used. Group $A$ works as the agent and group $B$ works as the movable token. Group $A$ believes robots from group $B$ as representing the position of the token only if there are at least $\lfloor {n/4} \rfloor$ robots
from group $B$ present. Similarly, the robots of group $B$ move as the token only when they get an instruction from a set of at least $\lfloor {n/4} \rfloor$ robots from group $A$. Map finding takes $O(n^3)$ rounds to terminate. After the map finding is done, all robots come back to the node from where they started (since they remember the port numbers they visited during map finding). Let this phase take $T$ rounds from the beginning of the algorithm (where $T = O(n^3)$).\\
\\
\noindent\textbf{Phase 2 (dispersion):} This phase starts in the  $(T+1)$-th round. In this phase, robots disperse using the following algorithm. It is similar to the rooted ring dispersion used in \cite{MMM20-byzantine,MMM21-TCS}. All the robots have a unique isomorphic map of the graph. The robots make a deterministic ordering of the nodes of the graph as $v(1), v(2), \cdots, v(n)$. The robot which was assigned to $i$ (in the deterministic increasing ordering of those initially gathered $k$ robots) decides to settle at node $v(i)$. Clearly this takes no more than $n$ rounds.\\
\\
\noindent\textbf{Correctness:} First we show that robots accurately generate a map in phase~1. Each robot remembers which robot belongs to which group among those $k$ gathered robots. As each robot remembers the IDs of the remaining $k-1$ gathered robots, if it interacts with a robot with an ID from outside this set of IDs, it can identify that robot as Byzantine.
Observe that each group have at least $\lfloor n/4 \rfloor$ non-Byzantine robots as all the non-Byzantine robots are divided in two groups. So both the groups have a majority of non-Byzantine robots. Even if robots change groups, both groups would still have a majority of non-Byzantine robots since the total number of Byzantine robots we allow is no more than $\lfloor{n/4 -1}\rfloor$.

As the robots are strong Byzantine, they can change their IDs and can even take IDs of non-Byzantine robots. But this does not hamper any communication as both group of robots trust only a group of at least $\lfloor n/4 \rfloor$ robots from the other group. Even if Byzantine robots duplicate IDs, still as a group they can not make it equal to  $\lfloor n/4 \rfloor$. At the end of phase 1, all the non-Byzantine robots are back in their initial positions, i.e., all are gathered at a single node again.

Now we look at the correctness of phase 2, i.e., we show that robots achieve Byzantine dispersion. The IDs of the robots are in $[1, n^c]$ and the non-Byzantine robots have distinct IDs. The ordering that initially gathered $k$ robots do on their IDs in increasing order, is always unique.
Also all the non-Byzantine robots have the unique isomorphic map (hence all compute exactly the same ordering of the nodes), the dispersion is correct.

The first phase takes $O(n^3)$ rounds and the second phase takes $O(n)$ rounds. Totally, the round complexity of the algorithm is $O(n^3)$ rounds.

Thus we have the following theorem.
\begin{theorem}\label{theorem:n/4s}
Byzantine dispersion can be solved in $O(n^3)$ synchronous rounds in the presence of at most $\lfloor{n/4 -1}\rfloor$ strong Byzantine robots when all robots start in a gathered configuration.
\end{theorem}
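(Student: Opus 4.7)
The plan is to follow the two-phase structure already laid out in the section: first have the gathered robots cooperate to produce a shared isomorphic map of $G$, and then use a purely deterministic rule based on the commonly-observed initial ID list to assign each non-Byzantine robot to a unique node.

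For phase 1, I would instantiate the movable-token exploration of Dieudonn\'e et al.~\cite{DPP14} with two large committees $A$ and $B$ of sizes $\lfloor k/2 \rfloor$ and $\lceil k/2 \rceil$, where $A$ plays the role of the agent and $B$ plays the role of the token, and where an instruction issued by one committee is accepted by the other only if at least $\lfloor n/4 \rfloor$ of its members demand it consistently. The core counting lemma I would prove is: since $f \le \lfloor n/4 - 1 \rfloor$, the number of non-Byzantine robots is at least $n - f \ge 3n/4 + 1$, so even if an adversary concentrates all $f$ Byzantine robots into one group, each of $A$ and $B$ still contains at least $\lfloor n/4 \rfloor$ non-Byzantine robots, while the total Byzantine population is strictly below $\lfloor n/4 \rfloor$. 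Consequently, (i) the Byzantine robots alone cannot manufacture an accepted instruction, and (ii) the honest robots in either group always suffice to form an accepted instruction; any fake IDs cloned by strong Byzantine robots do not help, because a committee is counted by physical presence, not by identity, and the check against the initial ID roster flags any robot carrying an ID not recorded at round $0$. This lets me conclude that the movable-token exploration terminates as in the honest execution, so every non-Byzantine robot ends phase 1 holding an isomorphic copy of $G$ and, by replaying its stored port sequence, returns to the gather node.

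For phase 2, I would argue that all non-Byzantine robots now agree on (a) the isomorphic map, from which they extract a canonical node ordering $v(1), \dots, v(n)$, and (b) the sorted list of the $k$ IDs observed at round $0$, whose rank function is a fixed input that no later Byzantine behavior can revise. Each non-Byzantine robot with rank $i$ walks to $v(i)$ along a shortest path in its map and settles; since ranks are distinct and the map is shared, no two non-Byzantine robots settle at the same node, discharging Definition~\ref{def:byzantine-dispersion}. This walk costs $O(n)$ rounds, and it is immune to Byzantine interference because each robot moves on its own copy of the graph and never needs to trust a statement from another robot in this phase.

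Summing the $O(n^3)$ cost of the movable-token exploration (and an equal cost to rewind the path home) with the $O(n)$ walk of phase 2 yields the $O(n^3)$ bound. The main obstacle I expect is precisely the identity-cloning power of strong Byzantine robots: because no single ID is trustworthy, every communication primitive used inside the imported exploration routine of \cite{DPP14} has to be re-implemented as a committee vote of size at least $\lfloor n/4 \rfloor$, and verifying that the resulting simulated execution is indistinguishable (to the non-Byzantine participants) from an honest single-agent-single-token run is the step that requires the most care in the formal write-up.
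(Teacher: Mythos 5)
Your proposal follows essentially the same route as the paper's own proof: the same two committees $A$ and $B$ of size about $k/2$ acting as agent and movable token in the exploration routine of Dieudonn\'e et al., the same acceptance threshold of $\lfloor n/4 \rfloor$ robots per instruction justified by the same counting argument (each group retains at least $\lfloor n/4 \rfloor$ non-Byzantine members while the Byzantine robots, even with cloned IDs, number strictly fewer than $\lfloor n/4 \rfloor$), followed by the same rank-to-node assignment $i \mapsto v(i)$ derived from the round-$0$ ID roster and the shared canonical map. The argument and the $O(n^3)+O(n)$ accounting are correct and match the paper.
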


\noindent \textbf{Arbitrary initial configuration:} We report a method that handles at most $\lfloor{n/4 -1}\rfloor$ strong Byzantine robots and solves the Byzantine dispersion problem, though it takes exponential rounds. We first gather robots using the algorithm from \cite{DPP14} that gathers at most $\lfloor{n/3 -1}\rfloor$ strong Byzantine robots in any graph of known size starting from an arbitrary initial configuration. Robots create groups such that each of them contains at least $f+1$ non-Byzantine robots and then gathering of those groups ensures gathering of all non-Byzantine robots. However the round complexity is exponential and the knowledge of $f$ is required in this case. Once the gathering is done, we may use the algorithm we described earlier to solve Byzantine dispersion when robots are initially gathered. As the correctness of both the map finding and dispersion phase with strong Byzantine robots are given above and the correctness of the gathering algorithm comes from~\cite{DPP14}, we have the following theorem.

\begin{theorem}\label{theorem:n/4s1}
Byzantine dispersion can be solved deterministically in an exponential number of rounds in the presence of at most $\lfloor{n/4 -1}\rfloor$ strong Byzantine robots when robots start in any arbitrary initial configuration and $f$ is known to all the robots.
\end{theorem}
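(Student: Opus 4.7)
The plan is to compose two building blocks in sequence: a gathering sub-routine that herds all non-Byzantine robots onto a single node even when the initial configuration is arbitrary, followed by the gathered-start algorithm from Theorem~\ref{theorem:n/4s}. Since Theorem~\ref{theorem:n/4s} already delivers dispersion against $\lfloor n/4-1 \rfloor$ strong Byzantine robots starting from a gathered configuration in $O(n^3)$ rounds, all that is required is to realize the gathering step. The exponential term in the statement should therefore come entirely from the gathering phase.

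For the gathering phase, I would invoke the strong-Byzantine gathering algorithm of Dieudonn\'e, Pelc, and Peleg~\cite{DPP14}, which, on graphs of known size, gathers any collection of robots as long as the non-Byzantine robots can be partitioned into groups each containing at least $f+1$ members; the algorithm runs in a number of rounds that is exponential in $n$. This is where the assumption that $f$ is known to all robots is used: robots need $f$ in order to know when a candidate subset of IDs is large enough to count as a ``trustworthy'' group of size $f+1$. Because we allow at most $\lfloor n/4-1 \rfloor$ Byzantine robots, the number of non-Byzantine robots is at least $3n/4+1$, which is more than enough to be split into subgroups of size $f+1$ and thus satisfy the hypothesis of the \cite{DPP14} procedure. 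At the end of this phase all non-Byzantine robots are collocated at a common node, and each of them can recognize the end of the phase via the deterministic worst-case round bound of the gathering routine.

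Once gathering is complete, every non-Byzantine robot is at a single node together with some subset of the Byzantine robots (call the number of collocated robots $k \ge n-f$), which is exactly the situation addressed by Theorem~\ref{theorem:n/4s}. I would therefore invoke that theorem as a black box, noting that its proof does not assume that the gathered set contains all $n$ robots, only that it contains every non-Byzantine robot. The map-finding phase then produces a common isomorphic map of $G$ (by the argument in Theorem~\ref{theorem:n/4s}, since each of the two groups still has a majority of non-Byzantine robots), and the ID-ordering step places each non-Byzantine robot at a distinct node, solving Byzantine dispersion.

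The correctness argument is thus a straightforward composition: gathering is correct by~\cite{DPP14}, map finding and dispersion are correct by the analysis of Theorem~\ref{theorem:n/4s}, and the two phases are cleanly sequenced via the deterministic round bound of the gathering procedure. The only subtle point, and the one I would flag as the main obstacle to presenting the result cleanly, is verifying that our bound $f \le \lfloor n/4 - 1 \rfloor$ is compatible with the group-size precondition of the \cite{DPP14} gathering routine; this reduces to a counting check that $n - f \ge f+1$, which follows easily from $f \le \lfloor n/4 - 1 \rfloor$. The total round complexity is dominated by the exponential gathering phase, giving the claimed bound.
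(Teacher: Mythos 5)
Your proposal is correct and follows essentially the same route as the paper: gather via the strong-Byzantine gathering algorithm of~\cite{DPP14} (exponential rounds, using the known value of $f$ to form trustworthy groups of size at least $f+1$), then run the gathered-configuration algorithm of Theorem~\ref{theorem:n/4s}, whose analysis was deliberately written for any $k \ge n-f$ collocated robots containing all non-Byzantine ones. The paper's own justification is just as brief, noting only that the~\cite{DPP14} gathering tolerates up to $\lfloor n/3-1\rfloor$ strong Byzantine robots (which covers $f \le \lfloor n/4-1\rfloor$) and that the composition inherits correctness from the two components.
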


\section{Impossibility Result}
\label{sec:impossibility-result}
Suppose that instead of restricting the number of robots to $n$, we now have $k$ robots that are trying to solve Byzantine dispersion on an $n$ node graph, where up to $f$ of the robots are Byzantine.\footnote{Note that $k$ can be either $\leq n$ or $> n$. However, the result is only truly meaningful when $k > n$.} The problem of Byzantine dispersion is slightly modified such that in the final configuration, each node should have at most $\lceil (k-f)/n \rceil$ non-Byzantine robots on it.  We show that, even if the robots know the values of $n$, $k$, and $f$, and the Byzantine robots are weak, if there are too many Byzantine robots, then deterministically solving Byzantine dispersion is impossible.

\begin{theorem}
There does not exist any deterministic algorithm to solve Byzantine dispersion when $\lceil k/n \rceil > \lceil (k-f)/n \rceil$.
\end{theorem}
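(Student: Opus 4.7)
The plan is a pigeonhole-plus-indistinguishability argument. The simplest form just exploits that ``up to $f$'' permits $0$ Byzantine robots: if a deterministic algorithm $A$ solved the problem, then an execution with no Byzantine robots would have to terminate in a configuration where all $k$ robots are non-Byzantine yet each node holds at most $\lceil (k-f)/n \rceil$ of them, forcing $k \le n\lceil (k-f)/n \rceil$ and contradicting the hypothesis $\lceil k/n \rceil > \lceil (k-f)/n \rceil$. To keep the argument robust (and to mirror the adversarial flavour of the rest of the paper, where Byzantine robots are actually present), I would present the equivalent indistinguishability-style version described below.

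Assume for contradiction that a deterministic algorithm $A$ solves Byzantine dispersion under the stated hypothesis, and consider the execution in which every one of the $k$ robots follows $A$ honestly. Because ``behave honestly'' is a legitimate (trivial) Byzantine strategy, this execution is consistent with \emph{any} chosen set $S$ of at most $f$ Byzantine robots; and by determinism of $A$, it reaches a unique terminal configuration $C$. Pigeonhole gives a node $v^*$ with $a_{v^*} \ge \lceil k/n \rceil$ robots in $C$. Now let the adversary retroactively designate $S$ so as to minimise $|S \cap \{\text{robots at }v^*\}|$; this intersection equals $\max(0, f + a_{v^*} - k)$, so the number of non-Byzantine robots sitting on $v^*$ in $C$ is $\min(a_{v^*},\,k-f)$. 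If $a_{v^*} \le k-f$ this minimum equals $a_{v^*}\ge \lceil k/n \rceil > \lceil (k-f)/n \rceil$ by hypothesis; otherwise it equals $k-f$, and $k-f > \lceil (k-f)/n \rceil$ in the non-degenerate regime $n\ge 2$, $k-f\ge 2$, which is all that matters since the boundary cases ($n=1$ or $k-f\le 1$) either make the dispersion constraint vacuous or the problem trivially solvable. Either way $v^*$ witnesses a violation of the dispersion requirement, contradicting the correctness of $A$.

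The main obstacle is formalising the retroactive-labelling step cleanly. I would emphasise that ``Byzantine'' is an adversarial \emph{label}, not an intrinsic behaviour, so pairing any label set $S$ with the honest-behaviour strategy is a valid adversarial choice. Because the execution transcript is then independent of $S$, the algorithm's terminal configuration is pinned down \emph{before} $S$ is declared, and the adversary may then choose $S$ adversarially against that configuration. Once this point is made precise, the remainder is a short arithmetic check on ceilings, and the argument works uniformly for weak Byzantine robots (and hence a fortiori for strong ones).
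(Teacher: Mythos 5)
Your proof is correct and takes essentially the same route as the paper's: run the deterministic algorithm with every robot behaving honestly, use pigeonhole to find a node receiving at least $\lceil k/n \rceil$ robots, and then retroactively designate the Byzantine set (honest behaviour being a legal Byzantine strategy) so that too many non-Byzantine robots end up on that node. If anything, your version is slightly more careful than the paper's, which implicitly assumes the $f$ Byzantine robots can be chosen disjoint from the crowded node and does not address the degenerate regimes ($n=1$ or $k-f\le 1$) that you correctly set aside.
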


\begin{proof} Let there exist some algorithm $A$ that solves Byzantine dispersion of $k$ robots on any $n$ node graph even when $\lceil k/n \rceil > \lceil (k-f)/n \rceil$. Let us consider one execution of this algorithm on $k$ robots when $f=0$. Now, consider one node where $\lceil k/n \rceil$ robots with IDs $R_1, R_2, \ldots, R_{\lceil k/n \rceil}$ settle. Consider another execution of this algorithm where robots $R_1, R_2, \ldots, R_{\lceil k/n \rceil}$ are non-Byzantine and some $f$ robots are Byzantine such that $\lceil k/n \rceil > \lceil (k-f)/n \rceil$. Let the Byzantine robots act like the non-Byzantine robots from the previous execution of $A$, resulting in the robots $R_1, R_2, \ldots, R_{\lceil k/n \rceil}$ settling on the same node. Now, $\lceil k/n \rceil$ non-Byzantine robots settled on the same node, but Byzantine dispersion requires at most $\lceil (k-f)/n \rceil$ non-Byzantine robots to settle on the same node and here $\lceil k/n \rceil > \lceil (k-f)/n \rceil$, resulting in a contradiction.
\end{proof}

\section{Conclusion and Future Work}
\label{sec:conclusion}
In this paper, we have demonstrated how Byzantine dispersion may be solved deterministically in arbitrary graphs. Depending on whether the Byzantine robots are weak or strong, whether the robots are initially gathered or not, and the upper bound on the number of Byzantine robots present, we have developed several algorithms that exploit those conditions to solve the problem as quickly as possible.

In terms of future directions, there are three broad avenues to look into. First, it appears that gathering is a major bottleneck with respect to time when we want to solve the problem. To solve this problem faster, it is useful to solve gathering in the presence of Byzantine robots faster. Second, it would be good to know if finding a map is necessary in order for robots to settle. If it can be bypassed, then perhaps faster algorithms for when robots are initially gathered can be found. Finally, it would be interesting to find ways to increase the Byzantine tolerance of algorithms for general graphs. We do have one algorithm that handles $n-1$ Byzantine robots, but that algorithm only works for certain graphs.

\section*{Acknowledgments}
\label{sec:acknowledgments}

The authors would like to thank Gopal Pandurangan for fruitful discussions.

\bibliographystyle{plain}
\bibliography{ref}

%--------------------------------------------------------------
\end{document}